\documentclass[journal,onecolumn,draftclsnofoot]{IEEEtran}
\IEEEoverridecommandlockouts

\usepackage{cite}
\usepackage{amsmath,amssymb,amsfonts,amsthm}
\usepackage{algorithmic}
\usepackage{graphicx}
\usepackage{textcomp}
\usepackage{xcolor}
\usepackage{url}
\usepackage{booktabs} 
\def\BibTeX{{\rm B\kern-.05em{\sc i\kern-.025em b}\kern-.08em
    T\kern-.1667em\lower.7ex\hbox{E}\kern-.125emX}}

\usepackage{tikz}
\usetikzlibrary{arrows.meta,positioning,calc,decorations.pathreplacing}

\usepackage{braket}

\newcommand{\Ht}[1]{\ensuremath{\mathcal{H}}}
\newcommand{\inner}[2]{\langle #1 | #2 \rangle}

\newtheorem{theorem}{Theorem}

\newtheorem{definition}{Definition}

\begin{document}

\title{Quantum Compression for Distributed Entanglement 
}

\author{
Jan~\O stergaard,
Shashi Raj Pandey,
Christophe Biscio,
Torben Bach Pedersen,
Petar Popovski%
\thanks{
Jan~\O stergaard, Shashi Raj Pandey, and Petar Popovski are with the Department of Electronic Systems, Aalborg University, Aalborg, Denmark (e-mails: \{jo, srp, petarp\}@es.aau.dk).

Christophe Biscio is with the Department of Mathematics, Aalborg University, Aalborg, Denmark (e-mail: christophe@math.aau.dk).

Torben Bach Pedersen is with the Department of Computer Science, Aalborg University, Aalborg, Denmark (e-mail: tbp@cs.aau.dk).

This work was supported, in part, by the Danish National Research
Foundation (DNRF), through the Center CLASSIQUE, grant nr. 187.

This work was partially presented in \cite{Ostergaard2026}.
}
}

\maketitle
\begin{abstract}
We study compression strategies for multipartite entanglement distribution under uncertainty in the partitioning of the quantum state. When the partition is not known at the time of state preparation, we show that a joint design of the resource state and a family of compression schemes can increase the entanglement across partitions under a fixed transmission budget.
We formulate this as a source coding problem and derive non-asymptotic upper and lower bounds on the achievable average entanglement subject to an average coding rate. We furthermore design an efficient method for jointly optimizing states and lossless compression maps by exploiting the inherent symmetry of weighted Dicke states. 
In the bipartite case, we propose practical constructions that closely approach the derived upper bound, and more generally we provide practical constructions for multipartite settings.
\end{abstract}

\begin{IEEEkeywords}
Entanglement, symmetry groups, compression.
\end{IEEEkeywords}

\section{Introduction}
Quantum data compression aims to reduce the physical resources such as qubits, gates, and memory required to store, transmit, or simulate quantum information, while retaining as much essential information as possible \cite{wilde2017quantum}. 

It might not be immediately obvious how quantum data compression can be achieved. In closed quantum systems, all admissible transformations are unitary, and therefore preserve the Hilbert space dimension as well as the volume of accessible quantum states, which then precludes any naive “compression” in the classical sense \cite{nielsen:2010}. Furthermore, the no-deletion theorem forbids the deletion of quantum information via unitary operations alone \cite{pati2000no_deleting}. 
Indeed, in contrast to classical information, multiple copies of an unknown quantum state cannot, in general, be faithfully represented by a single copy. Redundant quantum information can therefore not be naively discarded without loss \cite{nielsen:2010}.

To achieve quantum compression, one way is to embed the relevant states into a lower-dimensional subspace that can be faithfully represented using fewer qubits \cite{schumacher1995quantum}. This is done by identifying and isolating the subspace that captures the essential support of the states of interest, and then apply a transformation that maps it onto a smaller quantum memory register. One can hereafter perform projective measurements on the quantum system followed by a partial trace to discard the unwanted qubits with negligible loss of information.

Schumacher demonstrated that for an independent and identically distributed (i.i.d.) quantum source, asymptotically lossless compression is achievable at a rate equal to the von Neumann entropy of the source \cite{schumacher1995quantum}. In the asymptotic regime, where the number of source qubits tends to infinity, the von Neumann entropy constitutes a fundamental lower bound on the compression rate \cite{schumacher1995quantum}. Moreover, as noted by \cite{datta:2013}, 
Schumacher's strong converse theorem establishes that if the compression rate falls strictly below this threshold, the fidelity between the original and reconstructed quantum states decays exponentially with the block length. Thus, one cannot directly translate Schumacher's asymptotical method into a  lossy compression scheme that operates below the von Neumann entropy. 
Nevertheless, 
in many non-asymptotic applications, lossy quantum compression is both necessary and beneficial \cite{devetak2002quantum,devetak2005distillation,datta2013quantum}. It not only eliminates statistical redundancy, but also discards parts of the quantum state that are difficult or impossible to recover. This is particularly powerful in the context of quantum many-body systems, where the full Hilbert space grows exponentially with system size, yet only a small subset of states are physically relevant.

It has also recently become common to design quantum variational autoencoders as a means to compress higher-dimensional states into lower dimensional states \cite{morris2023qecautoencoder,garcia2024zqvae,romero2018qvae,jonathan2016qautoencoder}. Indeed, it was shown in \cite{gao2017efficient} that deep neural networks can efficiently represent most physical states.  

In classical source coding, efficient lossy or lossless compression schemes asymptotically remove statistical redundancy and produce outputs that are essentially incompressible and are generally close to being i.i.d. \cite{gray2009bernoulli}.
In contrast, efficient quantum compression schemes encode information into typical subspaces of the source, and the resulting compressed states generally exhibit nontrivial entanglement across the output qubits \cite{schumacher1995quantum,wilde2017quantum}. 
Thus, while classical compression tends to remove statistical correlations, quantum compression generally encodes information into correlated and often entangled degrees of freedom. 

A common task in quantum information processing is entanglement distribution \cite{nielsen:2010}. An entangled state on 
$n$ qubits is typically generated locally and subsequently partitioned into
$n \geq m\geq 2$ non-overlapping subsystems. The subsystems are then distributed among multiple parties. If the assignment of subsystems to parties is known in advance, one can tailor the state to exhibit high entanglement across the corresponding partitions. However, if the partitioning is not known at the time of state preparation, the state must be constructed such that it exhibits high entanglement across all possible partitions.
In the case of bipartitions, i.e., when the system is divided into two subsystems, it is in sometimes possible to achieve maximal entanglement across all such cuts. This leads to the notion of absolutely maximally entangled (AME) states, which are defined by the property that every bipartition yields a maximally mixed reduced density matrix on the smaller subsystem \cite{Helwig2012AME}. For qubit systems, AME states are known to exist only for 
$n=2,3,5,$ and $6$ qubits \cite{Scott2004Multipartite,Huber2017SevenQubitAME}. Therefore, for larger numbers of qubits, or when considering more general multipartitions, AME states do not exist.

If the subsystem partitioning is not known at the time of state generation, it was shown in~\cite{Ostergaard2026}, that 
quantum data compression could be used to concentrate the available entanglement into fewer qubits, and thereby at runtime improve the entanglement density for the selection of qubits that needed to be communicated. 
This gives rise to a problem with no direct classical analogue. The problem also differs fundamentally from standard quantum source coding. In Schumacher compression and its variants, the source is given \emph{a priori}, and the objective is to design an encoder--decoder pair adapted to that source or source ensemble. In contrast, here the source state itself becomes a design variable. Thus, the goal is not merely to compress a given source. Instead, the goals are twofold: \emph{a)} to construct a source, i.e., a quantum state, whose entanglement structure is robust under any partitioning and yet is compressible, and \emph{b)} design a family of distributed compressors that can  compress individual subsystems of the source without destroying the entanglement. 

Compared to \cite{Ostergaard2026}, this paper provides (i) a formal coding-theoretic formulation with joint optimization of states and compression maps, (ii) non-asymptotic entanglement–rate bounds, (iii) a multipartite extension to $m \geq 2$, and (iv) a noisy simulation study. These additions establish both fundamental limits and practical performance of the proposed framework.

\subsection*{Paper organization}
This paper is organized as follows. In Section~\ref{sec:motivation}, we motivate the proposed coding problem further by relating it to a distributed entanglement application. 
In Section~\ref{sec:problem} we introduce and formalize the coding problem. Section~\ref{sec:entanglement} introduces the multipartite entanglement measure that we will be using. Section~\ref{sec:design} contains the design  and optimization of the state and compression maps. Section~\ref{sec:bounds} establishes upper and lower bounds to the coding problem. Section~\ref{sec:simulations} presents simulations studies, and Section~\ref{sec:conclusions} contains the conclusions. Proof of theorems are deferred to the supplementary material.

\section{Problem motivation}\label{sec:motivation}
To motivate the problem further with a practical application, consider the case in which a multipartite quantum state is prepared at a base station and subsequently distributed to multiple parties. The intended use of the state is to provide multipartite entanglement as a resource. This would be a common scenario in quantum networks~\cite{Popovski2025_1Q}, where often the systems are subject to practically motivated constraints such as physical resource constraints and potentially uncertainty in subsystem partitioning.

The reason for having physical resource constraints could be motivated by the fact that the cost of transmitting, storing, or maintaining quantum systems scale with the number of physical qubits. In many architectures, it is significantly easier to generate large multipartite states than it is to distribute or store them across a network. As a result, the number of qubits that can be delivered to each party may be strictly limited, even when substantial entanglement is available at the source.

The assignment of physical qubits to subsystems may be predetermined in certain network architectures. However, one can also consider scenarios in which this assignment is not known at the time of state preparation. Such uncertainty may arise in reconfigurable quantum networks, satellite-based quantum communication systems, and quantum internet architectures.
For example, in quantum internet architectures, entanglement is often treated as a network resource that is distributed throughout a network and later consumed~\cite{Wehner2018,Perseguers2013}. This supports situations where the set of active users and the allocation of physical qubits to network nodes only become known after the underlying multipartite resource state has been prepared.

As a concrete example, consider a satellite-based quantum network in which satellites have only brief communication windows while passing within range of a ground station. The first satellite to establish a link may request as many available qubits as possible during its limited access window, leaving only the remaining qubits for subsequent satellites.  The final allocation of physical qubits to the subsystems is determined adaptively and cannot be assumed to be fixed at the time the global quantum state is prepared.
The prepared state should perform well not only for a single fixed partition, but on average across a family of possible partitions. We illustrate such a situation in Fig.~\ref{fig:network}. The figure illustrates the case of $m=2$ parties. An $n=7$-qubit state $\ket{\phi}$ is initially prepared at the base station. The two parties then request a certain number of qubits; for instance, Alice requests three qubits and Bob requests four. The base station randomly assigns three qubits to Alice and the remaining four to Bob, and subsequently compresses Alice’s subsystem into two qubits and Bob’s subsystem into three qubits before transmitting them to the respective parties. 
Compression is employed to reduce the required communication resources and to increase the amount of entanglement available per transmitted qubit. We consider a version of this example in the simulation section.


\begin{figure*}
\begin{tikzpicture}[scale=0.9,
    font=\small,
    >=Latex,
    qubit/.style={circle, draw, minimum size=5.5mm, inner sep=0pt},
    box/.style={rectangle, draw, rounded corners, minimum width=22mm, minimum height=10mm, align=center},
    party/.style={rectangle, draw, rounded corners, minimum width=18mm, minimum height=9mm, align=center},
    lab/.style={inner sep=1pt}
]

\node[lab] (phi) at (0,2.0) {$\ket{\phi}_{1\ldots 7}$};

\foreach \i in {1,...,7} {
  \node[qubit] (q\i) at (1.2*\i, 2.0) {\i};
}

\node[lab] (rand) at (5.0, 3.2) {random selection};

\foreach \i in {2,5,7} {
  \draw[very thick] (q\i) circle (3.6mm);
}
\node[lab] (selA) at (9.4, 2.95) {$3$ qubits};

\foreach \i in {1,3,4,6} {
  \draw[densely dashed] (q\i) circle (3.6mm);
}
\node[lab] (selB) at (9.4, 1.05) {$4$ qubits};

\coordinate (splitA) at (9.2,2.6);
\coordinate (splitB) at (9.2,1.4);

\foreach \i in {2,5,7} {
  \draw[->] (q\i.east) .. controls +(0.8,0.8) and +(-0.8,0.0) .. (splitA);
}
\foreach \i in {1,3,4,6} {
  \draw[->] (q\i.east) .. controls +(0.8,-0.8) and +(-0.8,0.0) .. (splitB);
}

\node[box] (compA) at (11.3,2.6) {Compress\\$3 \to 2$ qubits};
\node[box] (compB) at (11.3,1.4) {Compress\\$4 \to 3$ qubits};

\draw[->] (splitA) -- (compA.west);
\draw[->] (splitB) -- (compB.west);

\node[qubit] (a1) at (13.9,2.9) {$1_A$};
\node[qubit] (a2) at (15.1,2.9) {$2_A$};
\draw[->] (compA.east) -- (a1.west);
\draw (a1.east) -- (a2.west);

\node[qubit] (b1) at (13.7,1.1) {$1_B$};
\node[qubit] (b2) at (14.9,1.1) {$2_B$};
\node[qubit] (b3) at (16.1,1.1) {$3_B$};
\draw[->] (compB.east) -- (b1.west);
\draw (b1.east) -- (b2.west);
\draw (b2.east) -- (b3.west);

\node[party] (alice) at (17.6,2.9) {Alice\\($2$ qubits)};
\node[party] (bob)   at (17.6,1.1) {Bob\\($3$ qubits)};

\draw[->] (a2.east) -- (alice.west) node[midway, above] {};
\draw[->] (b3.east) -- (bob.west)   node[midway, above] {};

\node[lab, align=left] at (2.0,0.2) {%
\begin{tabular}{@{}l@{}}
\begin{tikzpicture}[baseline=-0.5ex]
\node[qubit] (x) at (0,0) {};
\draw[very thick] (x) circle (3.6mm);
\end{tikzpicture} selected for Alice \\
\begin{tikzpicture}[baseline=-0.5ex]
\node[qubit] (y) at (0,0) {};
\draw[densely dashed] (y) circle (3.6mm);
\end{tikzpicture} remaining for Bob
\end{tabular}
};

\end{tikzpicture}
\caption{An $n=7$ qubit symmetric state $\ket{\phi}$ is prepared. Three randomly chosen qubits are selected for Alice and the remaining four for Bob.
The subsystems are compressed such that the entanglement is confined to a lower-dimensional support, allowing it to be represented losslessly using fewer qubits.
This reduces communication costs and increases the amount of entanglement per transmitted qubit.
The random selection is independent of $\ket{\phi}$. This scenario illustrates the case of $m=2$ parties. In the general case, there can be $2\leq m\leq n$ parties.}
\label{fig:network}
\end{figure*}

\section{Problem Formulation}\label{sec:problem}
We now formalize the coding problem. In contrast to conventional source coding, the source is not given \emph{a priori}, but also needs to be optimized. The objective is to jointly construct a quantum state and a family of distributed compression schemes such that, for a given average coding rate across all $m$-partitions, the average entanglement across these partitions is maximized. 

Let $\ket{\psi} \in (\mathbb{C}^2)^{\otimes n}$ denote the $n$-qubit resource state (source state). We will here focus on the case, where $\ket{\psi}$ is restricted to be a pure state. 
Let $\Pi_{n,m}$ denote the set of all admissible partitions of the $n$ qubits into $m$ non-overlapping subsets. For a given partition $\pi \in \Pi_{n,m}$, we write:
\begin{align}
\pi &= \{S_{\pi,1}, \dots, S_{\pi,m}\}, 
\quad 
S_{\pi,i} \subseteq \{1,\dots,n\}, 
\quad 
S_{\pi,i} \cap S_{\pi,j} = \varnothing \ \text{for } i \neq j, \\
&\quad 
\bigcup_{i=1}^m S_{\pi,i} = \{1,\dots,n\},
\quad 
|S_{\pi,i}| = c_{\pi,i}, 
\quad 
\sum_{i=1}^m c_{\pi,i} = n.
\end{align}
The system is thus viewed as a distributed quantum source with local Hilbert spaces:
\begin{equation}
    \mathcal{H}_{S_{\pi,i}} \cong (\mathbb{C}^2)^{\otimes c_{\pi,i}}, \qquad i = 1, \dots, m.
\end{equation}

For each partition $\pi$, a distributed compression code consists of local completely positive trace-preserving (CPTP) maps
\begin{align}
    \mathcal{E}_{\pi,i} &: \mathcal{D}(\mathcal{H}_{S_{\pi,i}}) \to \mathcal{D}(\mathcal{H}_{R_{\pi,i}}), \\
    \mathcal{D}_{\pi,i} &: \mathcal{D}(\mathcal{H}_{R_{\pi,i}}) \to \mathcal{D}(\mathcal{H}_{S_{\pi,i}}),
\end{align}
where $R_{\pi,i}$ denotes the register holding the compressed information of subsystem $S_{\pi,i}$ of size $r_{\pi,i}$ qubits,  
$\mathcal{H}_{R_{\pi,i}} \cong (\mathbb{C}^2)^{\otimes r_{\pi,i}}$ and $\mathcal{D}(\mathcal{H})$ denotes the set of density operators on $\mathcal{H}$. 
The global encoder and decoder associated with partition $\pi$ are given by
\begin{equation}
    \mathcal{E}_\pi = \bigotimes_{i=1}^m \mathcal{E}_{\pi,i}, 
    \qquad
    \mathcal{D}_\pi = \bigotimes_{i=1}^m \mathcal{D}_{\pi,i}.
\end{equation}
The total rate for representing partition $\pi$ is defined as:
\begin{equation}
    r(\pi) = \sum_{i=1}^m r_{\pi,i}.
\end{equation}
Thus, $1\leq r_{\pi,i}\leq c_{\pi,i}$, where $c_{\pi,i}$ and $r_{\pi,i}$ denote the number of qubits allocated to the subset before and after compression, respectively.
The average coding rate (after compression) across all admissible partitions is then:
\begin{equation}
    R \triangleq \frac{1}{|\Pi_{n,m}|} \sum_{\pi \in \Pi_{n,m}} r(\pi).
\end{equation}

To avoid that the compression task destroys the entanglement, we quantify entanglement based on the \emph{reconstructed} state after compression and decompression. For each partition $\pi \in \Pi_{n,m}$, define
\begin{equation}
    \rho_\pi \triangleq \mathcal{D}_\pi \circ \mathcal{E}_\pi(\ket{\psi}\bra{\psi}).
\end{equation}
We then define the entanglement across partition $\pi$ as
\begin{equation}
    E(\pi) \triangleq E(\rho_\pi),
\end{equation}
where $E(\cdot)$ denotes a chosen entanglement measure across the subsystems specified by $\pi$.
The average entanglement is thus given by:
\begin{equation}
    E \triangleq \frac{1}{|\Pi_{n,m}|} \sum_{\pi \in \Pi_{n,m}} E(\rho_\pi).
\end{equation}

The coding problem can then be formulated as the following joint optimization:
\begin{align} \label{eq:opt_problem}
    \max_{\ket{\psi}, \{\mathcal{E}_{\pi,i}, \mathcal{D}_{\pi,i}\}} 
    \quad & E \\
    \text{subject to} \quad 
    & R \leq R_0, 
\end{align}
where $R_0$ is a prescribed average rate constraint. 
In some applications it might be relevant to replace the average rate constraint by individual rate constraints at each of the subsystems. 
We demonstrate in Section~\ref{sec:bounds} how individual rate constraints can be accommodated naturally within our optimization framework by restricting the feasible set to partitions of symmetric states that satisfy the prescribed rate constraints.

For the optimization problem in \eqref{eq:opt_problem}, then for each partition, $m$ encoders and $m$ decoders need to be specified. The number $|\Pi_{n,m}|$ of partitions is exponential in $n$ for a fixed $m$ and is given by the Stirling number of the second kind \cite{Stanley1999EC2}:
\begin{equation}
    |\Pi_{n,m}| = S(n,m),
\end{equation}
which counts the number of ways to partition an $n$-element set into $m$ non-empty and non-overlapping subsets. This gives rise to an exponential number of encoders-decoders making the general problem intractable. We will therefore later restrict attention to so-called permutation-invariant states and universal lossless symmetric-subspace compressors and show that this makes the problem tractable.

\section{Multipartite Entanglement}\label{sec:entanglement}
In this section, we introduce the notion of entanglement that we will be using.

Let $\rho_{AB}$ be a bipartite $n$-qubit system comprising the two subsystems $A$ and $B$. If $\rho_{AB}$ is a pure state, 
the entanglement between its subsystems can be quantified by the von Neumann entropy $S(\rho_A)=S(\rho_B)$ of the reduced states $\rho_A=\mathrm{Tr}_B(\rho_{AB})$  and $\rho_B=\mathrm{Tr}_A(\rho_{AB})$ \cite{nielsen:2010}:
\begin{align}\label{eq:ent}
 S(\rho_A) = -\mathrm{Tr}[\rho_A \log \rho_A].
\end{align}

For multipartite quantum systems, entanglement cannot in general be fully
characterized by bipartite entropy measures. For example, in a tripartite system
described by a state $\rho_{ABC}$, the von Neumann entropy of a reduced state,
such as $S(\rho_A)$, describes the \emph{bipartite} entanglement over the cut $A|BC$ but does 
not provide a complete description of the entanglement
shared among the subsystems $A$, $B$, and $C$. 

Several complementary approaches to quantifying multipartite entanglement have
been proposed in the literature. 
For example, there are distance-based measures, such as the relative entropy of entanglement, which offer a
conceptually appealing information-theoretic characterization but which are
computationally intractable for large multipartite systems~\cite{Vedral1997}.
Other measures include the three-tangle measure, which is a polynomial
entanglement invariant that quantifies genuine tripartite entanglement in
three-qubit systems and, in particular, detects GHZ-type correlations~\cite{CKW2000}.
However, the three-tangle measure does not admit a straightforward generalization to
larger multipartite systems. Moreover, the three-tangle entanglement vanishes for the
three-qubit W state. This indicates an absence of GHZ-type 
tripartite entanglement, even though the W state exhibits a high degree of other forms of
multipartite entanglement such as that captured by the geometric measure of entanglement (GME)\cite{WeiGoldbart2003}.

In its standard form, the GME is defined relative to the set of
fully separable states and thus probes genuine $n$-partite entanglement.
More generally, one may define a \emph{partition-based} or
\emph{block-product} geometric entanglement by restricting the optimization
to states that are separable with respect to a fixed multipartition of the
system~\cite{WeiGoldbart2003,Huber2010,Blasone2008}.
Specifically, let $\ket{\psi}\in(\mathbb{C}^2)^{\otimes n}$ be an $n$-qubit pure
state, and let $\lambda=\{A_1,\dots,A_m\}$ be a fixed partition of the
qubit indices $\{1,\dots,n\}$ into $m$ disjoint, non-empty blocks (an $m$-cut),
with $A_j\cap A_k=\emptyset$ for $j\neq k$ and
$\bigcup_{j=1}^m A_j=\{1,\dots,n\}$.  The set of $\lambda$-separable pure
states is defined as:
\begin{equation}
\mathrm{Sep}(\lambda)
=
\left\{
\ket{\phi}:\;
\ket{\phi}=\bigotimes_{j=1}^m \ket{\phi_{A_j}},
\quad
\ket{\phi_{A_j}}\in(\mathbb{C}^2)^{\otimes |A_j|}
\right\}.
\end{equation}

The geometric entanglement of $\ket{\psi}$ with respect to the partition
$\lambda$ is then given by
\begin{align}
E_G^{\lambda}(\ket{\psi})
&=
-\log_2 \Lambda_{\lambda}^2(\ket{\psi}), \\
\Lambda_{\lambda}^2(\ket{\psi})
&=
\max_{\ket{\phi}\in \mathrm{Sep}(\lambda)}
\left|\inner{\phi}{\psi}\right|^2.
\label{eq:EG_mpartition}
\end{align}

By construction, $E_G^{\lambda}(\ket{\psi})=0$ if and only if
$\ket{\psi}$ is a product state across the blocks of $\lambda$.

Note that, for a permutation-invariant state $\ket{\psi}$, all partitions sharing the same block-size multiset $\lambda$ yield the same geometric entanglement. Hence, we may equivalently optimize over $\lambda \in \Lambda_{n,m}$ with multiplicities $f(\lambda)$, and define
$E_G(\lambda) \triangleq E_G^{\lambda}(\ket{\psi})$.

\section{Joint Design of State and Compression Maps}\label{sec:design}
In this section, we introduce the type of quantum states that we will focus on. We furthermore present the compression schemes, and then rewrite the general coding problem \eqref{eq:opt_problem} into a simpler form, which is practically solvable under certain structural assumptions. 

\subsection{Permutation-Invariant States}
Due to the exponential number of distinct partitions, it quickly becomes infeasible to design individual compression schemes for each partition for the problem in 
\eqref{eq:opt_problem}. To reduce the complexity, it is advantageous to use an efficient representation of the states and compressors. One such representation is given by permutation invariant states, which are compressible under arbitrary partitions.

The Dicke states $\ket{D_k^{(n)}}$ of $n$ qubits with $k$ excitations are permutation-invariant states defined as \cite{Dicke1954,Marconi2025symmetric}:
\begin{align}
\ket{D_k^{(n)}} = \frac{1}{\sqrt{\binom{n}{k}}} \sum_{\substack{\mathrm{wt}(x) = k \\ x \in \{0,1\}^n}} \ket{x},
\end{align}
where the sum is over all computational basis states $\ket{x}$ of Hamming weight \( \mathrm{wt}(x)= k\), and \(\binom{n}{k}\) is the binomial coefficient. Choosing $k=\lfloor n/2 \rfloor$ and making an equal split of the system into two parties maximizes their entanglement over all choices of $k$ \cite{Stockton2003}. 

Another kind of symmetric (permutation invariant) states are the Comb states, which refers to a family of generalized Dicke states whose Hamming weights are evenly spaced by a step size $s$ \cite{Stockton2003}:
\begin{align}\label{eq:comb}
\ket{C(s)}
=
\frac{1}{\sqrt{2L+1}}
\sum_{\ell=-L}^{L} \ket{D^{(n)}_{n/2+\ell s}},
\end{align}
where $L$ is the largest integer such that
$0 \le n/2 + \ell s \le n$ for all $\ell \in \{-L,\dots,L\}$. For large $n$ and $s \approx \sqrt{2n}$, the number of terms satisfies $2L+1 \approx n/s$, which asymptotically maximizes the entanglement~\cite{Stockton2003}.

Since Dicke states forms an orthonormal basis in the symmetric subspace, any linear combination of them is automatically permutation invariant, cf. \cite{Keyl2002, Christandl2007}.
Specifically, let $\mathcal{K}=\{k_1,\dotsc, k_K\}$ be the set of $K$ distinct Hamming weights, each satisfying $0\leq k_i\leq n$. Moreover, let  $\mathcal{A} = \{\alpha_1,\dotsc, \alpha_K\}$, where $\alpha_i \in \mathbb{C}$ are the complex amplitudes used to weight the individual Dicke states. Then, a \emph{weighted} superposition of Dicke basis states with different Hamming weights can be constructed as follows:
\begin{align}\label{eq:mhw}
\ket{\psi_{\mathcal{K},\mathcal{A}}^{(n)}} = \frac{1}{\sqrt{\sum_{i=1}^K |\alpha_i|^2}} \sum_{i =1}^K\alpha_i 
\ket{D_{k_i}^{(n)}},
\end{align}
where $\alpha_i$ describes the complex amplitude (weight) for the $i$-th Dicke state. It can be shown that arbitrarily selected subsystems of $\ket{\psi_{\mathcal{K},\mathcal{A}}^{(n)}}$ are also permutation invariant \cite{Keyl2002,Christandl2007,Ostergaard2026}.

\subsection{Universal Compression Maps for Symmetric States}
The support of permutation invariant states lies entirely within the symmetric subspace. For an $n$-qubit system, the symmetric subspace $\mathcal H_{\mathrm{sym}}^{(n)} \subset (\mathbb{C}^2)^{\otimes n}$ has dimension $n+1$~\cite{plesch2010efficient,Harrow2013ChurchSymmetricSubspace,PivoluskaPlesch2022}. Therefore, any state supported on $\mathcal H_{\mathrm{sym}}^{(n)}$ can be represented using only $\lceil \log_2(n+1) \rceil$ qubits via an isometric embedding.

Now consider an $n$-qubit state that is partitioned into $m$ subsystems, where each subsystem is permutation invariant. Let the $i$th subsystem consist of $c_i$ qubits, with $\sum_{i=1}^m c_i = n$. Each subsystem can then be independently compressed into a register of size $r_i$ qubits, where
\begin{equation}
    r_i = \left\lceil \log_2(c_i + 1) \right\rceil.
\end{equation}
The total amount $r$ of qubits after compression is:
\begin{equation}
    r = \sum_{i=1}^m r_i.
\end{equation}

We now define the corresponding encoding and decoding maps. Let $\{\ket{D_w^{(c_i)}}\}_{w=0}^{c_i}$ denote the Dicke basis spanning $\mathcal H_{\mathrm{sym}}^{(c_i)}$. 
Let the isometry $V_i$ be defined as:
\begin{equation}
    V_i : \mathcal H_{\mathrm{sym}}^{(c_i)} \to (\mathbb{C}^2)^{\otimes r_i},
\end{equation}
where
\begin{equation}
    V_i \ket{D_w^{(c_i)}} = \ket{w}_{R_i}, \quad w = 0, \dots, c_i.
\end{equation}
Thus, $V_i$ can be seen as a compression map  that replaces a symmetric $c_i$-qubit state by the integer $w$ (its excitation number). While the original state requires $c_i$ qubits, the compressed state only requires $\lceil \log_2(c_i+1)\rceil$ qubits.

The map $V_i$ furthermore satisfies:
\begin{equation}
    V_i^\dagger V_i = \Pi_{\mathrm{sym}}^{(c_i)},
\end{equation}
where $\Pi_{\mathrm{sym}}^{(c_i)}$ is the projector onto the symmetric subspace. Thus, $V_i$ is invertible on the symmetric subspace. 
The local encoding and decoding maps are defined as
\begin{equation}
    \mathcal E_i(\rho) = V_i \rho V_i^\dagger, 
    \qquad
    \mathcal D_i(\sigma) = V_i^\dagger \sigma V_i.
\end{equation}
Strictly speaking, $\mathcal E_i$ and $\mathcal D_i$ are trace-preserving only on states supported on $\mathcal H_{\mathrm{sym}}^{(c_i)}$, since $V_i^\dagger V_i = \Pi_{\mathrm{sym}}^{(c_i)}$. In the following, we restrict attention to this subspace, on which the maps act as isometries.
By construction, for any state $\rho$ supported on $\mathcal H_{\mathrm{sym}}^{(c_i)}$, we have
\begin{equation}
    \mathcal D_i \circ \mathcal E_i(\rho) = \rho,
\end{equation}
and thus the compression is lossless, when the input state is symmetric.
The global compression map is given by
\begin{equation}
    V = \bigotimes_{i=1}^m V_i,
\end{equation}
which acts isometrically on states supported on $\bigotimes_{i=1}^m \mathcal H_{\mathrm{sym}}^{(c_i)}$. For such states $\ket{\psi}$, the compressed representation
$ \ket{\psi_{\mathrm{comp}}} = V \ket{\psi}$
preserves all quantum information, including entanglement.

The isometry $V_i$ is linear and depends only on the subsystem sizes $c_i$ and not on the specific input state. Therefore, the same encoder--decoder pair applies uniformly to all states supported on the symmetric subspaces. In this sense, the construction provides a universal, lossless compression scheme for permutation-invariant states.

\subsection{Access to entanglement in compressed domain}
\label{subsec:structural_access}
We now show that, for lossless compression into symmetric subspaces, the entanglement is faithfully represented in the compressed code space and can be accessed there directly. Hence, decompression is unnecessary.

Let $\pi=\{S_1,\dots,S_m\}$ be a partition with $|S_i|=c_i$, and assume that each subsystem is supported on the symmetric subspace
$
\mathcal H_{\mathrm{sym}}^{(c_i)} \subset (\mathbb C^2)^{\otimes c_i}.
$
For each block define a local compression isometry:
\begin{equation}  
V_i:\mathcal H_{\mathrm{sym}}^{(c_i)} \to \mathcal H_{R_i}\cong (\mathbb C^2)^{\otimes r_i},
\quad
r_i=\lceil \log_2(c_i+1)\rceil,
\end{equation}
and let $V_\pi=\bigotimes_{i=1}^m V_i$.
Then any physical state $\ket{\psi}$ supported on $\bigotimes_i \mathcal H_{\mathrm{sym}}^{(c_i)}$ is mapped to the compressed state
\begin{equation}
\ket{\psi_{\mathrm{comp}}}=V_\pi\ket{\psi}\in \bigotimes_{i=1}^m \mathcal H_{R_i}.
\end{equation}

Since $V_\pi$ is a product of local isometries, it preserves all entanglement properties with respect to the partition $\pi$. In particular, the compressed registers $R_1,\dots,R_m$ already form a valid multipartite entanglement resource. The relevant subsystem structure after compression is therefore the logical tensor product
$
\mathcal H_{\mathrm{code}}^{(\pi)}=\bigotimes_{i=1}^m \mathcal H_{R_i},
$
rather than the original qubit-wise factorization within each block.
Moreover, any operator $O_i$ acting on the symmetric support of subsystem $S_i$ has a logical representation
$
\widetilde O_i = V_i O_i V_i^\dagger
$
on $R_i$, so that
\[
\bra{\psi}\bigotimes_{i=1}^m O_i \ket{\psi}
=
\bra{\psi_{\mathrm{comp}}}\bigotimes_{i=1}^m \widetilde O_i \ket{\psi_{\mathrm{comp}}}.
\]
Hence measurements, local processing, and entanglement-assisted protocols that are defined on the support of the symmetric subspace can be implemented directly in the compressed domain. Thus, no decompression is required when using this lossless compression map. Decompression is only needed if one wishes to recover the original physical qubit representation as would be the natural objective in classical data compression.

\subsection{Computation of Entanglement}
It can  be shown that for permutation-invariant states, the GME only depends upon the sizes of the cuts and not the position of the individual qubits \cite{Hubener2009}. 
Thus, an $m$-cut is fully characterized by the multiset of block sizes, and the number of inequivalent $m$-cuts is therefore given by the number $p_m(n)$ of integer partitions of $n$ into exactly $m$ positive parts, which satisfies the recurrence relation
\begin{align}
p_m(n) = p_m(n-m) + p_{m-1}(n-1),
\end{align}
with the conditions $p_0(0)=1, p_m(n)=0$ for $m>n$, and $p_0(n) = 0$ for all $n$. 

Let $\ket{\psi}$ be an $n$-qubit permutation-invariant pure state, and let
$\lambda = \{\lambda_1,\dots,\lambda_m\}$ be an unordered multiset of positive
integers satisfying $\sum_{i=1}^m \lambda_i = n$.
Then the GME  associated with the partition
$\lambda$ is well defined and is independent of the particular choice or ordering of the subsystems
realizing the partition.
Thus, for any two $m$-cut partitions $\lambda$ and $\lambda'$ with the
same block-size multiset, $E_G^{\lambda}(\ket{\psi}) = E_G^{\lambda'}(\ket{\psi})$.

Let $\Lambda_{n,m}$ denote the set of all nontrivial block-size multisets:
\begin{align}\notag
\Lambda_{n,m}
=
\Bigl\{
&\lambda = (\lambda_1,\dots,\lambda_m)\in\mathbb{N}^m :
1 \le \lambda_1 \le \cdots \le \lambda_m, \\ \label{eq:Lambdaset}
&\sum_{i=1}^m \lambda_i = n
\Bigr\}, \quad |\Lambda_{n,m}| = p_m(n).
\end{align}

As an example, assume that $n=6$ and $m=3$. Then, there are three distinct partitions, i.e., $\Lambda_{6,3}=\{(1,1,4)$, $(1,2,3), (2,2,2)\}$. This greatly reduces the computational complexity since the number $p_m(n)$ of distinct cuts to consider is polynomial in $n$ rather than exponential. 
Of course, the different partitions do not necessarily occur an equal amount of times. Thus, to compute the average entanglement and coding rate one needs to include the frequency weights $f(\lambda)$, that is:
\begin{align}\label{eq:eer2_correct}
E
&=
\frac{1}{S(n,m)}
\sum_{\lambda\in \Lambda_{n,m}}
f(\lambda)\,
E_G^{\lambda}(\ket{\psi}), \\
R&= \frac{1}{S(n,m)}
\sum_{\lambda\in \Lambda_{n,m}}
f(\lambda)\,\sum_{i=1}^m r_{\lambda,i},
\end{align}
where $f(\lambda)$ denotes the number of distinct and non-trivial partitions that are a permutation of $\lambda$. 
For $\lambda=(\lambda_1,\dots,\lambda_m)\in\Lambda_{n,m}$, let $s_1,\dots,s_t$
be the distinct values appearing in $\lambda$ with multiplicities
$\mu_1,\dots,\mu_t$ (so $\sum_{j=1}^t \mu_j=m$). The number of set partitions
of $\{1,\dots,n\}$ into $m$ unlabeled blocks with block-size multiset $\lambda$
is then \cite{Stanley1999EC2}:
\begin{align}\label{eq:lambda}
f(\lambda)
&=
\frac{n!}{\prod_{i=1}^m \lambda_i!}\;
\frac{1}{\prod_{j=1}^t \mu_j!} \\
\sum_{\lambda \in \Lambda_{n,m}} f(\lambda) &= S(n,m).
\end{align}
For example, for $n=6$ and $m=3$, let $\lambda = (1,1,4), \lambda' = (1,2,3)$, and $\lambda'' = (2,2,2)$, then 
$f(\lambda)=15$ and $f(\lambda')=60,$ and $f(\lambda'')=15$.

\subsection{Optimization Algorithm}
Under the symmetry restriction, the $n$-qubit state is fully described by a small set of amplitudes over Dicke weights. This reduces the original infinite-dimensional optimization over $(\mathbb{C}^2)^{\otimes n}$ to a finite-dimensional problem over a low-dimensional space. We now describe how this structure can be exploited to efficiently obtain approximate solutions to the coding problem.

Assume the number of qubits $n$ and the number of parties (blocks) $m$ is known.
Then, we are interested in finding a good permutation-invariant state, which is supported on a small set of Dicke weights:
$\mathcal{K}=\{k_1,\ldots,k_K\}\subseteq\{0,1,\ldots,n\}$ with $|\mathcal{K}|=K$. For a given choice of $\mathcal{K}$ and complex amplitudes $\boldsymbol{\alpha}=(\alpha_1,\ldots,\alpha_K)\in\mathbb{C}^K$ satisfying $\|\boldsymbol{\alpha}\|_2^2=\sum_{j=1}^K|\alpha_j|^2=1$, the corresponding $n$-qubit resource state is
\begin{equation}\label{eq:optprob}
\ket{\psi_{\mathcal{K},\boldsymbol{\alpha}}^{(n)}}
=\sum_{j=1}^{K}\alpha_j\,\ket{D_{k_j}^{(n)}},
\qquad
\sum_{j=1}^{K}|\alpha_j|^2=1,
\end{equation}
where $\ket{D_{k_j}^{(n)}}$ denotes the (normalized) Dicke state of Hamming weight $k_j$.

Under the symmetry restriction and the Dicke-state parametrization introduced above, the joint optimization problem in \eqref{eq:opt_problem} reduces to selecting a finite set of Hamming weights and their associated amplitudes. This yields a structured, finite-dimensional optimization problem, which we formulate as follows:
\begin{align}\label{eq:final_opt}
\max_{\substack{\mathcal{K}\subseteq\{0,\ldots,n\}\\ |\mathcal{K}|=K}}
\ \max_{\substack{\boldsymbol{\alpha}\in\mathbb{C}^K\\ \|\boldsymbol{\alpha}\|^2_2=1}}
\ E \\
\mathrm{subject\ to:} & \ R \leq R_0.
\end{align}
Equation~\eqref{eq:final_opt} is a mixed discrete--continuous, highly non-convex optimization problem: the outer maximization selects the set of Hamming weights $\mathcal{K}$, while the inner maximization optimizes the complex amplitudes $\boldsymbol{\alpha}$ on the unit sphere.

For a fixed Hamming-weight support $\mathcal K$, the amplitudes
$\boldsymbol{\alpha}\in\mathbb C^K$ are optimized numerically using
the Sequential Least Squares Quadratic Programming (SLSQP) algorithm.
In the implementation, the complex vector is represented as a
real vector $x = (\operatorname{Re}\alpha_1,\ldots,\operatorname{Re}\alpha_K,
     \operatorname{Im}\alpha_1,\ldots,\operatorname{Im}\alpha_K)$,
and the normalization constraint is imposed as $
\sum_{k=1}^K \left(\operatorname{Re}(\alpha_k)^2
+\operatorname{Im}(\alpha_k)^2\right)=1$.    
SLSQP minimizes the negative average geometric entanglement, so that the
corresponding maximization problem is:
$
\max_{\boldsymbol{\alpha}}
\ \frac{1}{|\Lambda_{n,m}|}
\sum_{\lambda\in\Lambda_{n,m}}
E_G^\lambda(\ket{\psi(\boldsymbol{\alpha})})$
subject to $\|\boldsymbol{\alpha}\|_2^2 = 1$.
For each evaluation of the objective function, the corresponding weighted Dicke state is constructed and the average GME is computed using the alternating product-state optimization described below.

To compute the GME across an $m$-partition $\lambda=\{S_1,\ldots,S_m\}$ of $\{1,\ldots,n\}$, we consider the following optimization problem:
\begin{equation}
E_G^{\lambda}(\ket{\psi})
:=
-\log_2\!\left(
\max_{\ket{\phi_i}\in\mathcal{H}_{S_i}}
\left|\bra{\phi_1\otimes\cdots\otimes\phi_m}\ket{\psi}\right|^2
\right),
\end{equation}
which tries to find the maximal overlap between $\ket{\psi}$ and a fully product state compatible with the partition. Thus, we need to find a state $\ket{\psi}$ that solves:
\begin{align}
\Lambda_{\max}^\lambda(\ket{\psi})
&\triangleq
\max_{\substack{\|\phi_i\|=1\\ i=1,\ldots,m}}
\left|\bra{\phi_1\otimes\cdots\otimes\phi_m}\ket{\psi}\right|.
\end{align}

Since this optimization is nonconvex and admits no known closed-form solution in general, we approximate $\Lambda_{\max}^\lambda$ numerically using an alternating optimization procedure, which is standard in the study of geometric entanglement \cite{BaiYanZeng2023}. The method proceeds as follows. Fix all local states except $\ket{\phi_i}$, and optimize the overlap with respect to $\ket{\phi_i}$ alone. For a fixed $\{\ket{\phi_j}\}_{j\neq i}$, the optimal update is obtained by projecting $\ket{\psi}$ onto the tensor product of the remaining local states,
\begin{equation}
\ket{v_i}
=
\left(
\bra{\phi_1}\otimes\cdots\otimes\bra{\phi_{i-1}}
\otimes I_{S_i}
\otimes\bra{\phi_{i+1}}\otimes\cdots\otimes\bra{\phi_m}
\right)\ket{\psi}.
\end{equation}
The update is then obtained by normalization: $\ket{\phi_i} = \ket{v_i}/ \|\ket{v_i}\|$. Cycling through $i=1,\ldots,m$ defines one sweep of the algorithm.

Starting from a random initial product state, this update rule is iterated until convergence of the overlap. To mitigate convergence to suboptimal local maxima, we perform multiple random restarts and keep the largest overlap observed. 
Since the alternating optimization is not guaranteed to find the global optimum, the obtained overlap $\Lambda$
 provides a lower bound on $\Lambda_\mathrm{max}$, and hence the corresponding value of $E_G(\ket{\psi};\pi)$  constitutes an upper bound on the GME.

As previously mentioned, we do not need to check all $m$-partitions. It suffices to consider only inequivalent partitions classified by their block-size profiles, i.e., those $m$-partitions that are in $\Lambda_{n,m}$ \eqref{eq:Lambdaset} \cite{Hubener2009,MarkhamVirmani2010}.

\section{Entanglement--Rate Bounds}\label{sec:bounds}
In this section, we characterize the fundamental entanglement--rate trade-off associated with the coding problem in \eqref{eq:final_opt}. Specifically, we derive upper bounds that limit the maximum achievable average entanglement under a given rate constraint. Moreover, we derive lower bounds obtained via explicit constructions of permutation-invariant states and lossless compression schemes. We will use these bounds in the simulation section, when evaluating the performance of the proposed optimization algorithm. 

While our results are derived under an average rate constraint across all possible partitions, they can be generalized straightforwardly to individual subsystem rate constraints. For example, consider $n=7$ qubits and suppose that subsystem 1 is subject to a rate constraint $R_1 =2$  while subsystem 2 is subject to a rate constraint $R_2=3$. In this case, 
the feasible set must be restricted to partitions that satisfy both constraints. If $k$ qubits are given to subsystem 1 and $n-k$ to subsystem 2, the corresponding coding rates are $\lceil \log_2(k+1)\rceil$ and $\lceil \log_2(n-k+1)\rceil)$, respectively. This implies that the admissible partitions are  $(1,6), (2,5)$, and $(3,4)$, whereas $(4,3), (5,2)$, and $(6,1)$ are excluded since $\lceil \log_2(k+1)\rceil >R_1$  violate the rate constraint associated with subsystem 1 for $k\geq 4$.

\subsection{Two parties: $m=2$}
For the case of $m=2$, and restricting attention to lossless coding schemes
with pure reconstructed states, we can use the von Neumann entropy as a measure
of bipartite entanglement.
\begin{definition}[Average von Neumann entanglement]
For any $n\geq 2$, we define $\overline E_{\mathrm{vN}}$ as the average von Neumann entanglement:
\begin{align}
\overline E_{\mathrm{vN}}
&\triangleq
\frac{1}{S(n,2)}
\sum_{\lambda \in \Lambda_{n,2}}
f(\lambda)\, S(\rho_\lambda),
\end{align}
where $S(\rho_\lambda)$ denotes the von Neumann entropy of any of the reduced states
of the reconstructed pure state corresponding to any bipartition with block-size
pattern $\lambda=(k,n-k)$.
\end{definition}

\begin{theorem}[Upper bound for $m=2$]\label{thm:tradeoff_m2}
Let $\ket{\psi}$ be an $n$-qubit permutation-invariant pure state, and consider all bipartitions $\pi = A \mid A^c$. Let $k = \min(|A|, |A^c|)$ denote the size of the smaller subsystem, and let $r_k$ denote the total compression rate assigned to all bipartitions with smaller subsystem size $k$, i.e., $r_k = r_A + r_{A^c}$.
Then the average rate is
\begin{equation}
R =
\frac{1}{2^{n-1}-1} \sum_{k=1}^{\lfloor n/2 \rfloor} f(k,n-k) \, r_k,
\end{equation}
and every admissible lossless coding scheme with pure reconstructed state satisfies
\begin{equation}
\overline{E}_{\mathrm{vN}}
\le
\frac{1}{2^{n-1}-1}
\sum_{k=1}^{\lfloor n/2 \rfloor}
f(k,n-k) \,
\min\!\left\{
\left\lfloor \frac{r_k}{2} \right\rfloor,\,
\log_2(k+1)
\right\}.
\end{equation}
\end{theorem}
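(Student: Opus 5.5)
The plan has two parts. First I reduce the average rate to a sum over the size $k$ of the smaller block. Then I bound the entanglement term by term, using the compression rate for one factor of the minimum and the symmetric-subspace dimension for the other.

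\textbf{Average rate.} For $m=2$ every bipartition $A\mid A^c$ is determined, as a block-size multiset, by $k=\min(|A|,|A^c|)\in\{1,\dots,\lfloor n/2\rfloor\}$. Thus $\Lambda_{n,2}=\{(k,n-k)\}_{k=1}^{\lfloor n/2\rfloor}$. By \eqref{eq:lambda} the number of realizations is $f(k,n-k)=\binom{n}{k}$ for $k\neq n/2$ and $\tfrac12\binom{n}{n/2}$ for $k=n/2$. Summing gives
\[
\sum_k f(k,n-k)=\tfrac12\left(2^n-2\right)=2^{n-1}-1=S(n,2).
\]
If every bipartition of pattern $k$ is assigned the total rate $r_k=r_A+r_{A^c}$, substituting into the expression for $R$ in \eqref{eq:eer2_correct} gives the stated formula directly.

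\textbf{Entanglement bound.} The entanglement part follows by bounding each summand $S(\rho_\lambda)$ separately, since the weights $f(\lambda)/S(n,2)$ are nonnegative. Fix a bipartition of pattern $k$ and let $\ket{\psi}$ be the reconstructed pure state. Its Schmidt rank across $A\mid A^c$ is bounded in two ways.

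\emph{(i) Symmetric-subspace bound.} Because $\ket{\psi}$ is permutation invariant, its reduced state on the smaller block (say $A$, with $|A|=k$) is supported on $\mathcal H_{\mathrm{sym}}^{(k)}$, which has dimension $k+1$. Hence $S(\rho_A)\le \log_2(k+1)$.

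\emph{(ii) Rate bound.} The scheme is lossless and the reconstructed state is pure, and the encoders are local. So $\ket{\psi}$ equals $\mathcal D_A\otimes\mathcal D_{A^c}$ applied to a state on $R_A\otimes R_{A^c}$, and $V_\pi$ is a product of local isometries. Local operations cannot increase entanglement across the cut, and for pure states a local isometry preserves the Schmidt coefficients. Therefore $S(\rho_A)$ equals the entanglement of the compressed state on $R_A R_{A^c}$. That entanglement is at most $\log_2\dim\mathcal H_{R_A}$ and at most $\log_2\dim\mathcal H_{R_{A^c}}$, so
\[
S(\rho_A)\le\min(r_A,r_{A^c})\le\lfloor r_k/2\rfloor ,
\]
where the last step holds because $r_A,r_{A^c}$ are integers with sum $r_k$.

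Combining (i) and (ii) gives $S(\rho_\lambda)\le\min\{\lfloor r_k/2\rfloor,\log_2(k+1)\}$ for each summand. Averaging with the weights $f(k,n-k)/(2^{n-1}-1)$ yields the claim.

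\textbf{Main obstacle.} The delicate step is (ii): justifying that the entanglement of the reconstructed state is bounded by that of the compressed registers for a general admissible scheme, not only for the isometric symmetric compressors. I would handle this in two steps. First, I would invoke monotonicity of entanglement entropy under local operations, via the data-processing inequality for the coherent information, or by noting that any mixedness would contradict purity of the reconstructed state. Second, I would use the fact that the Schmidt rank of any pure state on $R_A\otimes R_{A^c}$ is at most $2^{\min(r_A,r_{A^c})}$.
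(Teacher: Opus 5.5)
Your proposal is correct and follows essentially the same route as the paper's proof: the symmetric-subspace bound $S(\rho_A)\le\log_2(k+1)$, then the register bound $S(\rho_A)\le\min\{r_A,r_{A^c}\}\le\lfloor r_k/2\rfloor$ from monotonicity under the local encoders/decoders plus the Schmidt-rank limit, then averaging with weights $f(k,n-k)/(2^{n-1}-1)$; you also verify $\sum_k f(k,n-k)=2^{n-1}-1$, which the paper leaves implicit. You are more careful than the paper, which silently treats the compressed state as pure when it speaks of its Schmidt rank, but a mixed compressed state does not by itself contradict a pure reconstruction (an encoder may add junk that the decoder discards), so the correct form of your second remedy is this: purity of $(\mathcal D_A\otimes\mathcal D_{A^c})(\sigma)$ forces every Kraus branch $(K_a\otimes L_b)\ket{\phi_i}$ of a pure decomposition $\sigma=\sum_i p_i\ket{\phi_i}\bra{\phi_i}$ to be proportional to the reconstructed state, whose Schmidt rank is therefore at most $2^{\min(r_A,r_{A^c})}$.
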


\begin{theorem}[Lower bound for $m=2$]\label{thm:low_m2}
If the average rate satisfies:
\begin{equation}
R\geq 
\frac{1}{2^{n-1}-1}
\sum_{k=1}^{\lfloor n/2\rfloor}
f(k,n-k)
\left(
\left\lceil \log_2(k+1)\right\rceil
+
\left\lceil \log_2(n-k+1)\right\rceil
\right),
\end{equation}
Then, the average entanglement  is lower bounded by:
\begin{equation}
\overline E_{\mathrm{vN}}
\ge
\frac{1}{2^{n-1}-1}
\sum_{k=1}^{\lfloor n/2\rfloor}
f(k,n-k)
\left(
-\sum_{j=0}^{k}\lambda_j^{(k)}\log_2\lambda_j^{(k)}
\right),
\end{equation}
where
\begin{equation}
\lambda_j^{(k)}
=
\frac{\binom{k}{j}\binom{n-k}{t-j}}{\binom{n}{t}},
\qquad
t=\lfloor n/2\rfloor .
\end{equation}
\end{theorem}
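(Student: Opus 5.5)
The lower bound is achievability, so I would exhibit one explicit state and coding scheme that meet the rate hypothesis and attain the stated average entanglement. I take the Dicke state $\ket{D_t^{(n)}}$ with $t=\lfloor n/2\rfloor$, which is permutation invariant. For every bipartition $A\mid A^c$ with $|A|=k$, each block is compressed with the universal symmetric-subspace isometries $V_A$ and $V_{A^c}$ of Section~\ref{sec:design}. These use $\lceil\log_2(k+1)\rceil+\lceil\log_2(n-k+1)\rceil$ qubits.

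Averaging this cost with the weights $f(k,n-k)$ and normalizing by $S(n,2)=2^{n-1}-1$ gives exactly the right-hand side of the rate hypothesis. The scheme is therefore feasible whenever $R$ is at least that value; if the budget is larger, the extra qubits can be left unused. The compression is lossless on symmetric supports, as shown in Section~\ref{sec:design}. The reconstructed state is therefore $\ket{D_t^{(n)}}$ itself, so it is pure, and the entanglement across each cut equals the entropy of the reduced state of the original Dicke state.

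The core computation is the Schmidt decomposition of $\ket{D_t^{(n)}}$ across a cut with $|A|=k$. I split each weight-$t$ string $x$ as $x=(x_A,x_{A^c})$ with $\mathrm{wt}(x_A)=j$ and $\mathrm{wt}(x_{A^c})=t-j$. Grouping the terms by $j$ gives
\[
\ket{D_t^{(n)}}=\sum_{j}\sqrt{\tfrac{\binom{k}{j}\binom{n-k}{t-j}}{\binom{n}{t}}}\,\ket{D_j^{(k)}}\otimes\ket{D_{t-j}^{(n-k)}}.
\]
The Dicke states on $A$ with different $j$ are orthonormal, and so are those on $A^c$, so this is already a Schmidt decomposition. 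The squared coefficients are $\lambda_j^{(k)}$; they sum to one by Vandermonde's identity, and terms with infeasible $j$ vanish automatically. The von Neumann entropy of the cut is therefore $-\sum_{j=0}^{k}\lambda_j^{(k)}\log_2\lambda_j^{(k)}$. This value depends only on $k$, consistent with permutation invariance. Summing with weights $f(k,n-k)$ over $k=1,\dots,\lfloor n/2\rfloor$ and dividing by $2^{n-1}-1$ gives the stated expression.

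Since this particular scheme achieves that value, the value is a lower bound on the optimal $\overline E_{\mathrm{vN}}$ at that rate. The main thing to check carefully is that the index set and weights match the definition of $\overline E_{\mathrm{vN}}$. In particular, when $n$ is even the pattern $(n/2,n/2)$ must be counted with $f=\tfrac12\binom{n}{n/2}$, which is how the paper's $f(\lambda)$ is defined. With that convention, $\sum_k f(k,n-k)=2^{n-1}-1$.
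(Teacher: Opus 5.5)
Your proposal is correct and follows the paper's proof essentially step for step. You choose $\ket{D_t^{(n)}}$ with $t=\lfloor n/2\rfloor$, compress each block with the symmetric-subspace isometry at total cost $\lceil\log_2(k+1)\rceil+\lceil\log_2(n-k+1)\rceil$, read the entropy off the Schmidt decomposition grouped by the Hamming weight $j$ on $A$, and average with the weights $f(k,n-k)$; your extra checks (normalization via Vandermonde's identity, $f(n/2,n/2)=\tfrac12\binom{n}{n/2}$ so that $\sum_k f(k,n-k)=2^{n-1}-1$, and feasibility under a larger budget) only make explicit what the paper leaves implicit.
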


\subsection{Multipartite case: $m\geq 3$}
For the case of $m\geq 3$, we can use the GME as a measure for entanglement between the $m$-partitions. 

\begin{theorem}[Upper bound for $m\geq 3$]\label{thm:ubm}
Let $\ket{\psi}$ be an $n$-qubit permutation-invariant pure state and let
$\pi=\{S_1,\dots,S_m\}$ be an $m$-partition with block sizes
\[
|S_i|=\lambda_i,\quad
1\le \lambda_1\le \cdots \le \lambda_m,\quad
\sum_{i=1}^m \lambda_i=n.
\]
Suppose the local encoders map the $m$ subsystems into registers of sizes
$r_1,\dots,r_m$ qubits, respectively, so that the total rate for partition $\pi$ is
\begin{equation}
r(\pi)=r_1+\cdots+r_m \triangleq r_\lambda.
\end{equation}
Then, the average rate is given by:
\begin{equation}
R
=
\frac{1}{S(n,m)}
\sum_{\lambda\in\Lambda_{n,m}}
f(\lambda)\, r_\lambda,
\end{equation}
and every admissible lossless coding scheme with pure reconstructed state, whose local encoders and decoders act as isometries on the relevant supports, satisfies:
\begin{align}
\overline E_{\mathrm G}
\le
\frac{1}{S(n,m)}
\sum_{\lambda\in\Lambda_{n,m}}
f(\lambda)\,
\min\!\left\{
\left\lfloor \frac{m-1}{m}\, r_\lambda \right\rfloor,\,
\log_2\!\Bigl(\prod_{i=1}^{m-1}(\lambda_i+1)\Bigr)
\right\}.
\end{align}
\end{theorem}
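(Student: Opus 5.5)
The average rate formula follows from grouping partitions by block-size multiset, as in the computation leading to \eqref{eq:eer2_correct}. The bound itself is proved partition by partition. For each fixed $\pi$ with block-size multiset $\lambda$, I will show
\[
E_G^{\lambda}(\rho_\pi)\le \min\Bigl\{\Bigl\lfloor \tfrac{m-1}{m} r_\lambda\Bigr\rfloor,\ \log_2\prod_{i=1}^{m-1}(\lambda_i+1)\Bigr\},
\]
and then average with the weights $f(\lambda)/S(n,m)$. The key reduction is the one in Subsection~\ref{subsec:structural_access}. The local encoders and decoders act as isometries on the relevant supports, and the scheme is lossless with a pure reconstructed state. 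Hence $\rho_\pi=\ket{\psi}\bra{\psi}$, and $E_G^\lambda$ may be evaluated either on $\ket{\psi}$ restricted to $\bigotimes_i \mathcal H_{\mathrm{sym}}^{(\lambda_i)}$ or on the compressed state $V_\pi\ket{\psi}\in\bigotimes_i\mathcal H_{R_i}$. To justify this, note that the overlap with $\lambda$-product states is invariant under local isometries. A local isometry maps product states to product states, and conversely the optimal product state can be taken inside the local supports by projecting each factor. So $\Lambda_\lambda^2$ is the same in both pictures.

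Both bounds will come from one elementary fact about a pure state on $\mathcal H_1\otimes\cdots\otimes\mathcal H_m$ with local dimensions $d_i$. Write $\ket{\psi}=\sum_{j}\sqrt{p_j}\ket{j}_{1\ldots m-1}\ket{\chi_j}_m$, grouping blocks $1,\dots,m-1$ against block $m$. Here $j$ runs over an orthonormal basis of product vectors $\ket{e_{a_1}}\cdots\ket{e_{a_{m-1}}}$ of size $D=\prod_{i\le m-1} d_i$, and the $\ket{\chi_j}$ are normalized but not necessarily orthogonal. The product state $\ket{j^*}\ket{\chi_{j^*}}$ with $p_{j^*}=\max_j p_j\ge 1/D$ then gives
\[
\Lambda^2_\lambda\ge 1/D,\qquad E_G^\lambda\le \log_2 D .
\]
The choice of which block to exclude from the product is free, so one may exclude the block of largest dimension.

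\textbf{Second term.} Apply the fact in the physical picture with $d_i=\lambda_i+1$, the dimension of the symmetric subspace. Excluding the largest block $\lambda_m$ gives $E_G^\lambda\le\log_2\prod_{i=1}^{m-1}(\lambda_i+1)$.

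\textbf{First term.} Apply the fact in the compressed picture with $d_i=2^{r_i}$. Excluding the block with the largest $r_i$ gives $E_G^\lambda\le \sum_i r_i-\max_i r_i\le r_\lambda-\lceil r_\lambda/m\rceil$. Since $E_G$ is not necessarily an integer, I cannot simply round the bound down, so I have to check that this is $\le\lfloor\frac{m-1}{m}r_\lambda\rfloor$. Writing $r_\lambda=qm+s$ with $0\le s<m$, both sides equal $r_\lambda-q-[s>0]$, so the inequality holds with equality. This confirms that the floor is exactly what the max-register argument yields, analogous to $\lfloor r_k/2\rfloor$ in Theorem~\ref{thm:tradeoff_m2}.

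The main obstacle I expect is making the isometry-invariance step rigorous. The paper's encoders are only trace preserving on the symmetric support, so I need that the maximizing product state in the compressed space pulls back under $V_\pi^\dagger$ to a product state of no smaller overlap, and vice versa. I would handle this by noting that $\langle \phi_1\cdots\phi_m|V_\pi|\psi\rangle=\langle V_1^\dagger\phi_1\cdots V_m^\dagger\phi_m|\psi\rangle$ and that $\|V_i^\dagger\phi_i\|\le1$. Normalizing can only increase the overlap, which gives one inequality; the reverse inequality follows by symmetry. Taking the minimum of the two bounds and summing over $\lambda\in\Lambda_{n,m}$ with weights $f(\lambda)$ completes the proof.
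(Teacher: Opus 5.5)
Your proof is correct. Its architecture matches the paper's: reduce to the compressed pure state via local isometries, prove $E_G\le\log_2\prod_{i\ne i^*}d_i$ for a pure state with local dimensions $d_i$, specialize once with $d_i\le 2^{r_i}$ (together with $r-\lceil r/m\rceil=\lfloor\frac{m-1}{m}r\rfloor$) and once with $d_i\le\lambda_i+1$, then average with weights $f(\lambda)$.

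The real difference is how you prove the dimension bound.

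\textbf{The paper's route.} It uses iterated Schmidt decompositions. The largest squared Schmidt coefficient across $S_{(1)}|S_{(2)}\cdots S_{(m)}$ is at least $1/d_{(1)}$, and the argument then recurses on the remaining $m-1$ parties. For the successive overlaps to multiply, each $\eta$ must be proportional to $(\bra{\alpha_{(1)}}\otimes I)\ket{\widetilde\psi_\pi}$, i.e.\ the top Schmidt pair. The paper leaves this implicit.

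\textbf{Your route.} You expand in an orthonormal product basis of $m-1$ blocks and apply pigeonhole to the weights $p_j$, leaving the last block's vector free. This takes one step and gives exactly the same bound. It also makes transparent that any single block may be excluded. In particular you can drop the block with the largest $r_i$ even when the rate ordering differs from the dimension ordering, which the paper handles only tacitly.

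\textbf{What you make explicit.} You also spell out two points the paper merely asserts.
\begin{itemize}
\item The inequality $\Lambda_\lambda(V_\pi\ket{\psi})\le\Lambda_\lambda(\ket{\psi})$, via the contractions $V_i^\dagger$. This is the only direction an upper bound needs. Your remark that the reverse inequality follows ``by symmetry'' is therefore unnecessary. It is also not quite accurate, since $V_i^\dagger$ is not an isometry; the reverse direction actually comes from projecting each product factor onto the local symmetric supports.
\item The check that $r-\lceil r/m\rceil$ equals the floor exactly. This matters because $E_G$ need not be an integer.
\end{itemize}
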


\begin{theorem}[Lower bound for $m\geq 3$]
\label{thm:lbm}
Let $t=\lfloor n/2\rfloor$, and let $E_G^{(D)}(\lambda)$ denote the geometric
entanglement of the Dicke state $\ket{D_t^{(n)}}$ across any partition with
block sizes $\lambda=(\lambda_1,\dots,\lambda_m)\in\Lambda_{n,m}$.
If the average rate satisfies
\begin{equation}
R
\ge
\frac{1}{S(n,m)}
\sum_{\lambda\in\Lambda_{n,m}}
f(\lambda)
\sum_{i=1}^m
\left\lceil \log_2(\lambda_i+1)\right\rceil,
\end{equation}
then the average entanglement is lower bounded by
\begin{equation}
\overline E_{\mathrm G}
\ge
\frac{1}{S(n,m)}
\sum_{\lambda\in\Lambda_{n,m}}
f(\lambda)\, E_G^{(D)}(\lambda).
\end{equation}
\end{theorem}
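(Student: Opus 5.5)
The proof is an achievability argument: we exhibit one feasible state together with one coding scheme and evaluate it. Since $\overline E_{\mathrm G}$ is the maximum over all feasible designs, any explicit design meeting the rate constraint gives a lower bound. We take the resource state $\ket{\psi}=\ket{D_t^{(n)}}$ with $t=\lfloor n/2\rfloor$, i.e. $\mathcal K=\{t\}$ and $\alpha_1=1$ in \eqref{eq:optprob}. For every partition $\pi$ with block-size multiset $\lambda$ we use the universal symmetric-subspace compressors $V_i$ of Section~\ref{sec:design}, with register sizes $r_i=\lceil\log_2(\lambda_i+1)\rceil$.

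The first step is to show that this scheme is admissible and lossless. Each reduced state of a Dicke state on a block $S_i$ is permutation invariant. More directly, we can use the Vandermonde-type decomposition
\[
\ket{D_t^{(n)}}=\sum_{w_1+\dots+w_m=t}\sqrt{\tfrac{\prod_i\binom{\lambda_i}{w_i}}{\binom{n}{t}}}\,\bigotimes_i\ket{D_{w_i}^{(\lambda_i)}}.
\]
This shows that $\ket{\psi}\in\bigotimes_i\mathcal H_{\mathrm{sym}}^{(\lambda_i)}$. Hence $V_\pi=\bigotimes_i V_i$ acts isometrically on $\ket{\psi}$, and $\mathcal D_\pi\circ\mathcal E_\pi(\ket\psi\bra\psi)=\ket\psi\bra\psi$. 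The reconstructed state is therefore pure and equal to the Dicke state itself, so $E(\rho_\pi)=E_G^{\lambda}(\ket{D_t^{(n)}})=E_G^{(D)}(\lambda)$. The same conclusion holds if one evaluates entanglement in the compressed domain, because local isometries preserve the maximal product overlap; this is the argument of Section~\ref{subsec:structural_access}.

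The second step is the rate computation. For this scheme, $r(\pi)=\sum_i\lceil\log_2(\lambda_i+1)\rceil$ depends only on $\lambda$. Grouping the $S(n,m)$ partitions by block-size multiset with multiplicities $f(\lambda)$, as in \eqref{eq:eer2_correct}, gives an average rate equal exactly to the right-hand side of the rate hypothesis, which is at most $R$, so the scheme is feasible. If a budget strictly larger than needed is available, the extra qubits can be padded as idle ancillas, or simply left unused, since the constraint is $R\le R_0$. The same grouping applied to the entanglement values gives
\[
\frac{1}{S(n,m)}\sum_{\lambda}f(\lambda)E_G^{(D)}(\lambda),
\]
and the maximal achievable $\overline E_{\mathrm G}$ is at least this value.

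The main subtlety I expect is justifying that $E_G^{\lambda}$ is well defined per multiset, i.e. independent of which qubits go to which block. This follows from permutation invariance: a qubit permutation $P$ maps $\mathrm{Sep}(\pi)$ onto $\mathrm{Sep}(P\pi)$ and fixes $\ket\psi$. A second point is to make clear that the bound concerns the optimum of \eqref{eq:final_opt}. Nothing about the exact value of $E_G^{(D)}(\lambda)$ needs to be computed, because it enters the bound as a given quantity.
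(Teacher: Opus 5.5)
Your proposal is correct and follows essentially the same achievability argument as the paper. Like the paper, you take $\ket{D_t^{(n)}}$, use its decomposition into local Dicke states to place it in $\bigotimes_i \mathcal H_{\mathrm{sym}}^{(\lambda_i)}$, compress each block losslessly into $\lceil\log_2(\lambda_i+1)\rceil$ qubits, and average over partitions with weights $f(\lambda)$. The only difference is cosmetic: the paper also writes out $\Lambda_\lambda$ as an overlap maximization over Dicke-basis coefficients, which, as you note, is unnecessary because $E_G^{(D)}(\lambda)$ enters the bound as a given quantity.
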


\section{Simulation Study}\label{sec:simulations}
In this section we present the results of our simulation studies. 
For each $n$, we optimize over the complex amplitudes $\mathcal{A}$ and weights $\mathcal{K}$ and numerically solve \eqref{eq:final_opt}. In all simulations, 
we limit the number of Hamming weights in the weighted Dicke states to be $K\leq 3$. We did not experience any improvements in the results by increasing $K$ beyond $3$.
We compare the resulting average entanglement to the corresponding lower and upper bounds obtained in Section~\ref{sec:bounds} for a given average rate. We also return to the motivating example introduced in Section~\ref{sec:motivation}, and provide a simulation study that takes circuit noise and channel noise into account. 

\subsection{Two parties: $m=2$}
For bipartite partitions ($m=2$), the maximal possible values (upper bound) of the GME and the von Neumann entanglement entropy coincide and are given by $\log_2(r)$, i.e., the logarithm of the Schmidt rank $r$ across the bipartitions.    
However, the actual values of the two entanglement measures generally differ. For a bipartite pure state with Schmidt coefficients ${\lambda_i}$, the GME is $E_G = -\log_2 (\lambda_{\max})$,    
where $\lambda_{\max} = \max_i \lambda_i$, whereas the von Neumann entanglement entropy is
$E_{\mathrm{vN}}
=-\sum_i \lambda_i \log_2 (\lambda_i)$. This implies that $E_G \le E_{\mathrm{vN}}$ with equality only when all nonzero Schmidt coefficients are equal, for example for maximally entangled states. Thus, for $m=2$, the von Neumann entropy depends on the entire spectrum, whereas the GME depends only on the largest Schmidt coefficient, which makes the von Neumann entropy more sensitive to changes in the distribution of Schmidt coefficients. Moreover, due to its closed-form, it is computationally simpler to optimize for the von Neumann entropy than the GME. 

Fig.~\ref{fig:m2} presents the performance of optimized weighted Dicke states for the two different optimization criteria. The first set of states is optimized to maximize the von Neumann entropy, and the resulting performance is quantified by the von Neumann entropy. The second set is optimized to maximize the GME, with performance evaluated using the GME.
%
 %
 %
\begin{figure}
    \centering
    \includegraphics[width=14cm]{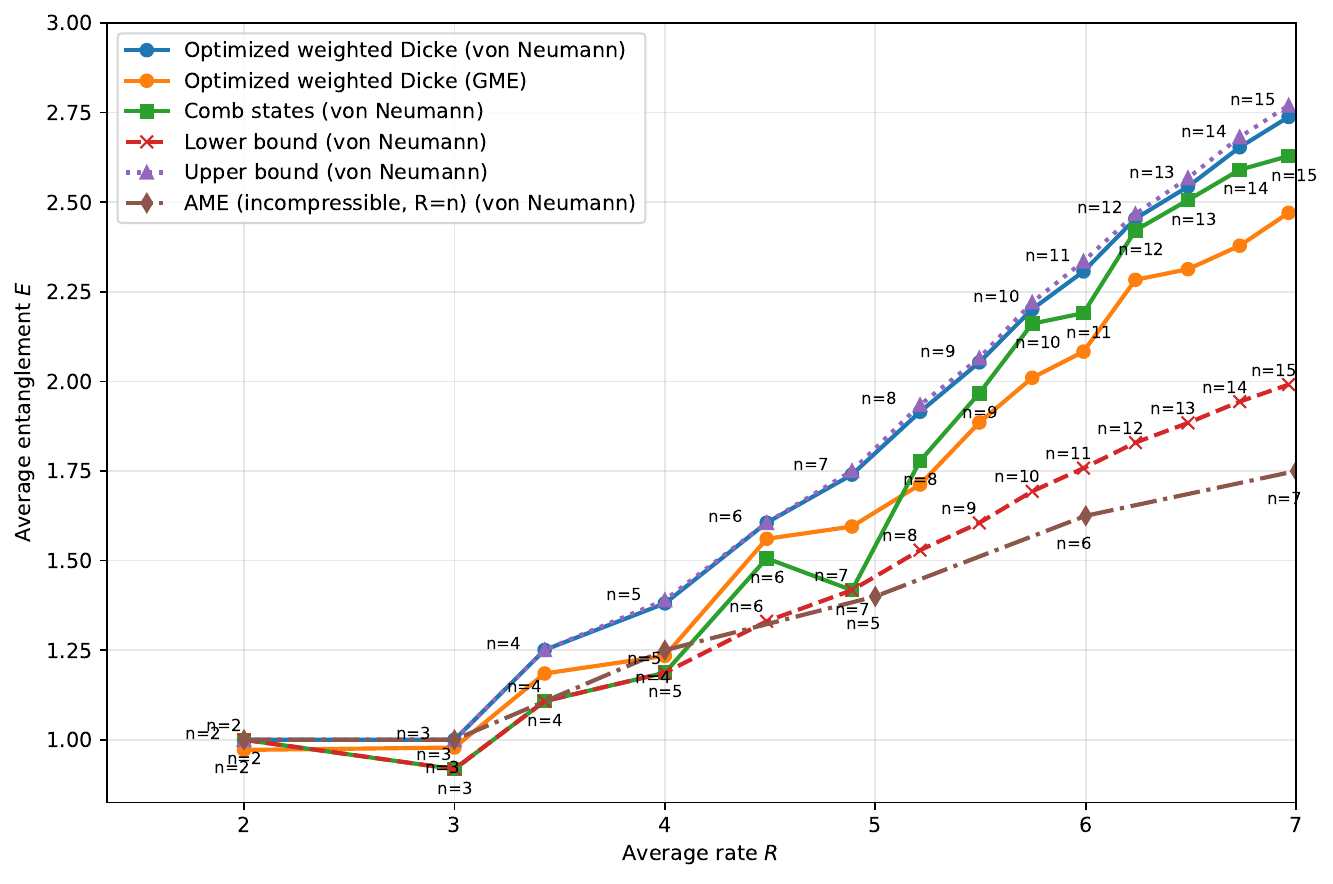}
 \caption{Entanglement-rate curves for $m=2$ and $n=2,\dotsc, 15$. The upper and lower bounds are given by Theorems~\ref{thm:tradeoff_m2} and~\ref{thm:low_m2}, respectively. The weighted Dicke states are the solutions to~\eqref{eq:final_opt} and the Comb states are given by \eqref{eq:comb}. Also shown is the performance of AME states \eqref{eq:ame}. The numbers on the curves indicate $n$.}
    \label{fig:m2}
\end{figure}
 In Fig.~\ref{fig:m2}, it can be observed that for $n=2,3, 4$, and $6$,  the states and compression schemes optimized for the von Neumann entropy have performance reaching the upper bound given by Theorem \ref{thm:tradeoff_m2}, hence 
 they are globally optimal. For all other $n$, the von Neumann optimized states are close to the upper bound. Note that it is in fact unclear whether the upper bound is even achievable in these situations. The curves are not always monotonic or convex. This is partly due to that the states  are not necessarily optimal but also due to that $\log_2(k+1)$ is not always an integer, which means it is rounded upwards to nearest integer (except for the upper bound). In principle, one can apply standard time sharing arguments to convexify the curves. 

Also shown in the figure are the performance for the Comb states. Interestingly,  for $n\geq 10$, the Comb states appears to be close to the upper bound. In \cite{Stockton2003}, it was shown that the Comb states can approach maximal entropy for the fully balanced bipartite cut. However, it can easily be shown that they do not have maximal entanglement over all cuts. Nevertheless, if asymptotically in $n$, the most frequently occurring cuts concentrate near the balanced cut, then the Comb states could potentially be asymptotically optimal. We have not been able to show whether this is satisfied. 

It is interesting to consider the hypothetical case that there would exist AME states for all $n$ and $m=2$. Moreover, 
such states have maximally mixed reductions on the smaller side of every bipartition
and are therefore not losslessly compressible without sacrificing entanglement.
So let us assume that we do not accept a loss in entanglement, in that case the entanglement-rate function of such AME states is given by:
\begin{equation}\label{eq:ame}
\overline{E}_{\mathrm{AME}}(n)
=
\frac{1}{S(n,2)}
\sum_{\lambda \in \Lambda_{n,2}}
f(\lambda)\, \min(\lambda_1,\lambda_2).
\end{equation}
We have plotted this function in Fig.~\ref{fig:m2}. For comparison, we have also plotted the lower bound given in Theorem~\ref{thm:low_m2}. For $n> 6$, the lower bound is greater than $\overline{E}_{\mathrm{AME}}(n)$, which demonstrates the efficiency of compression.

\subsection{Three parties: $m=3$}
In this simulation, we fix $m=3$ and consider systems of $n=3,\dotsc, 15$ qubits. 
The optimal states found for $n=3,\dotsc,10$, are shown in Table~\ref{tab:psi_weights_amplitudes_by_n}, where both the optimal Hamming weights and complex amplitudes are shown. The corresponding GME for each $m$-cut partition is also shown in Table~\ref{tab:Eg_n5_m3}. 
The number of possible $m$-cuts $\{\lambda\}$ is polynomial in $n$ due to the invariance of the GME to symmetry within the states. We have also shown $f(\lambda)$, i.e., the number of occurrences of each $m$-cut $\lambda$. Given $E_G$ and $f(\lambda)$ for each $\lambda$ makes it possible to compute the average entanglement, which is Shown in 
Fig.~\ref{fig:m3}.

For comparison, we also show the resulting partition GME and average entanglement for the Dicke state.
Note that $\ket{W_n} = \ket{D_1^{(n)}},$ i.e., the Dicke state with one excitation.
 For $n=3, m=3$, it is known that $\ket{W_3}$ maximizes the GME, and this is given by $E(W_3) = -\log_2(4/9) \approx 1.1699$. In Table~\ref{tab:Eg_n5_m3}, it can be seen that the numerical optimization is very close to this optimal value. For $n>3$, there are several $m=3$ cuts and one needs to compute GME for each such cuts, which is illustrated in Table~\ref{tab:Eg_n5_m3}. We have also shown $E(\ket{D_n^{(n/2)}})$ in the table, which is the maximal balanced Dicke state, and which is known to have high entanglement \cite{Stockton2003}.
 %
 %

\begin{figure}
    \centering
    \includegraphics[width=14cm]{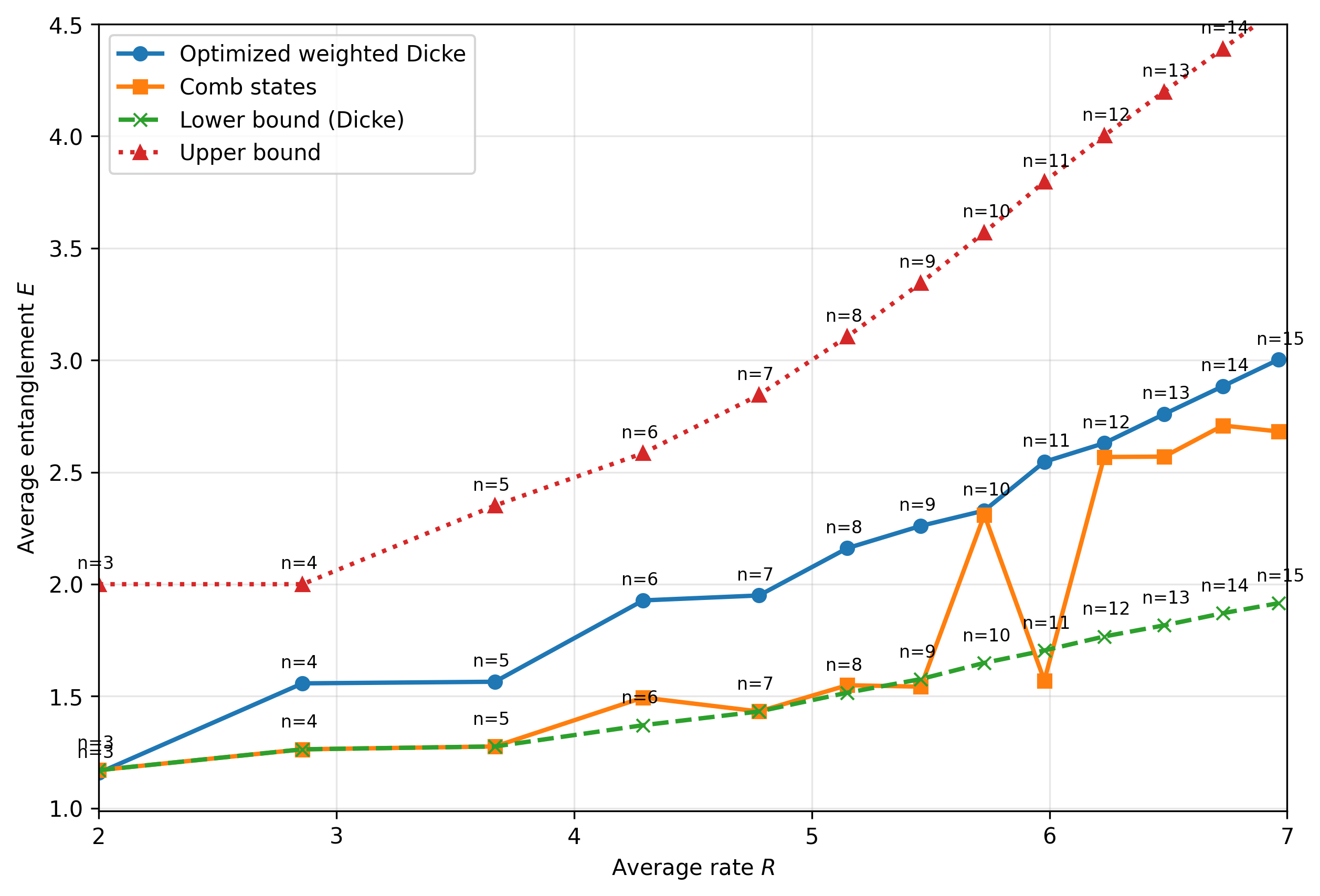}
    \caption{Entanglement-rate curves for $m=3$ and $n=3,\dotsc, 15$. The upper and lower bounds are given by Theorems~\ref{thm:ubm} and~\ref{thm:lbm}, respectively. The weighted Dicke states are the solutions to~\eqref{eq:final_opt} and the Comb states are given by \eqref{eq:comb}. The numbers on the curves indicate $n$.}
    \label{fig:m3}
\end{figure}

\begin{table}[t]
\centering
\scriptsize
\caption{Numerically obtained Hamming-weight support and amplitudes of the optimized symmetric state $\ket{\psi}$ for each $n$ (fixed $m=3$, $K\leq 3$). For a given $n$, the same state is used across all partitions $\lambda$.}
\label{tab:psi_weights_amplitudes_by_n}
\setlength{\tabcolsep}{4pt}
\begin{tabular}{c c c}
\toprule
$n$ & Weights & Amplitudes \\
\midrule
3  & $[0,1,3]$ &
$\begin{array}{@{}c@{}}\left[-0.0463-0.4630i,\; 0.5155-0.2730i,\right.\\ \left.0.2232-0.6272i\right]\end{array}$ \\
\midrule
4  & $[0,3,4]$ &
$\begin{array}{@{}c@{}}\left[0.4640+0.3533i,\; 0.3815+0.7171i,\right.\\ \left.-0.0058-0.0127i\right]\end{array}$ \\
\midrule
5  & $[1,3,5]$ &
$\begin{array}{@{}c@{}}\left[-0.6457-0.3846i,\; -0.0113-0.3494i,\right.\\ \left.-0.5296+0.1801i\right]\end{array}$ \\
\midrule
6  & $[0,3,6]$ &
$\begin{array}{@{}c@{}}\left[-0.2702-0.3852i,\; -0.4450+0.5927i,\right.\\ \left.-0.4371-0.1955i\right]\end{array}$ \\
\midrule
7  & $[1,3,5]$ &
$\begin{array}{@{}c@{}}\left[-0.4760+0.4861i,\; 0.0388+0.0521i,\right.\\ \left.-0.6453+0.3412i\right]\end{array}$ \\
\midrule
8  & $[0,4,8]$ &
$\begin{array}{@{}c@{}}\left[0.4084-0.1626i,\; -0.3501-0.6901i,\right.\\ \left.-0.1451-0.4323i\right]\end{array}$ \\
\midrule
9  & $[1,5,9]$ &
$\begin{array}{@{}c@{}}\left[-0.2159+0.5311i,\; -0.5701+0.3900i,\right.\\ \left.0.1468+0.4155i\right]\end{array}$ \\
\midrule
10 & $[0,3,8]$ &
$\begin{array}{@{}c@{}}\left[-0.1711+0.3480i,\; 0.2868-0.5329i,\right.\\ \left.-0.4506-0.5295i\right]\end{array}$ \\
\bottomrule
\end{tabular}
\end{table}

\begin{table}[t]
\centering
\scriptsize
\caption{Partition based geometric entanglement and coding rates as a function of $n$ for fixed $m=3$ and $K\leq 3$. Three symmetric states are considered. Dicke states $\ket{D^{(n)}_{\lfloor n/2\rfloor}}$ and $\ket{D_1^{(n)}}$  as well as $\ket{\psi}$ obtained numerically as the solution to \eqref{eq:final_opt}.}
\label{tab:Eg_n5_m3}
\setlength{\tabcolsep}{3pt}
\begin{tabular}{@{}c c c c c c c@{}}
\toprule
$n$ & $R$ & $\lambda$ & $E^\lambda_G(\ket{\psi})$ & $E_G^{\lambda}(\ket{D_n^{(1)}})$ & $E_G^{\lambda}(\ket{D_n^{(\lfloor n/2\rfloor)}})$& $f(\lambda)$ \\
\midrule
3  & 3 & $(1,1,1)$ & 1.1599 & 1.1699 & 1.1699 & 1 \\
\midrule
4  & 4 & $(1,1,2)$ & 1.5204 & 1.0000 & 1.2630 & 6 \\
\midrule
5  & 4 & $(1,1,3)$ & 1.4280 & 0.7370 & 1.2775 & 10 \\
   & 5 & $(1,2,2)$ & 1.5933 & 1.1293 & 1.2746 & 15 \\
\midrule
6  & 5 & $(1,1,4)$ & 1.5320 & 0.5850 & 1.3219 & 15 \\
   & 5 & $(1,2,3)$ & 1.8601 & 1.0003 & 1.3949 & 60 \\
   & 6 & $(2,2,2)$ & 1.9593 & 1.1699 & 1.3223 & 15 \\
\midrule
7  & 5 & $(1,1,5)$ & 1.5417 & 0.4854 & 1.3219 & 21 \\
   & 6 & $(1,2,4)$ & 1.8881 & 0.8074 & 1.4035 & 105 \\
   & 5 & $(1,3,3)$ & 2.0098 & 1.0969 & 1.4544 & 70 \\
   & 6 & $(2,2,3)$ & 2.0198 & 1.1293 & 1.4675 & 105 \\
\midrule
8  & 5 & $(1,1,6)$ & 1.5578 & 0.4150 & 1.3479 & 28 \\
   & 6 & $(1,2,5)$ & 1.9744 & 0.6781 & 1.4545 & 168 \\
   & 5 & $(1,3,4)$ & 2.1620 & 1.0004 & 1.5120 & 280 \\
   & 6 & $(2,2,4)$ & 2.1620 & 1.0002 & 1.5071 & 210 \\
   & 6 & $(2,3,3)$ & 2.3268 & 1.1520 & 1.5799 & 280 \\
\midrule
9  & 5 & $(1,1,7)$ & 1.5382 & 0.3626 & 1.3455 & 36 \\
   & 6 & $(1,2,6)$ & 1.9444 & 0.5850 & 1.4595 & 252 \\
   & 6 & $(1,3,5)$ & 2.2020 & 0.8480 & 1.5375 & 504 \\
   & 7 & $(1,4,4)$ & 2.3003 & 1.0768 & 1.5521 & 315 \\
   & 7 & $(2,2,5)$ & 2.2085 & 0.8480 & 1.5567 & 378 \\
   & 7 & $(2,3,4)$ & 2.3337 & 1.1085 & 1.6173 & 1260 \\
   & 6 & $(3,3,3)$ & 2.3599 & 1.1699 & 1.6534 & 280 \\
\midrule
10 & 6 & $(1,1,8)$ & 1.5462 & 0.3219 & 1.3626 & 45 \\
   & 6 & $(1,2,7)$ & 1.9502 & 0.5146 & 1.4884 & 360 \\
   & 6 & $(1,3,6)$ & 2.2518 & 0.7370 & 1.5718 & 840 \\
   & 7 & $(1,4,5)$ & 2.4771 & 1.0000 & 1.6050 & 1260 \\
   & 7 & $(2,2,6)$ & 2.2484 & 0.7370 & 1.5828 & 630 \\
   & 7 & $(2,3,5)$ & 2.4949 & 1.0000 & 1.6688 & 2520 \\
   & 8 & $(2,4,4)$ & 2.5531 & 1.1293 & 1.6760 & 1575 \\
   & 7 & $(3,3,4)$ & 2.5682 & 1.1520 & 1.7153 & 2100 \\
\bottomrule
\end{tabular}
\end{table}

\subsection{Effect of channel and circuit noise}
We now consider the example from Section~\ref{sec:motivation}. In particular, 
we consider the case where a 6-qubit symmetric state is prepared and evenly split into two subsystems each containing three qubits. The subsystems are further "losslessly" compressed from three to two qubits and transmitted to Alice and Bob. We model the circuit noise due to compression and transmission noise separately. We compare the cases with zero and non-zero circuit noise as well as to the case without compression. The results can be seen in Fig.~\ref{fig:entanglement}, where we illustrate the resulting entanglement as a function of channel error probability. 
Note that the inclusion of gate and channel noise renders the states mixed. The GME is well defined for pure states but less interpretable for mixed states. Indeed, the GME increases under depolarizing noise, which makes it unsuitable for our situation. Instead, in this simulation we quantify entanglement using the logarithmic negativity, which was proposed in~\cite{VidalWerner2002}, and later proved in~\cite{Plenio2005LogNegativity} to be an additive entanglement monotone:
\begin{equation}
E_N(\rho_{AB}) = \log_2 \| \rho_{AB}^{T_A} \|_1,
\end{equation}
where $T_A$ denotes the partial transpose with respect to subsystem $A$ and $\|X\|_1$ is the trace norm.
 
In \cite{PivoluskaPlesch2022}, it was shown that one can efficiently implement a transpiled compression circuit that maps 3 qubits to 2 by using 
 9 CNOT gates, 23 RZ rotations, and 14 $\sqrt{X}$ gates. Thus, we assume the compression circuit uses $N_1=37$ single-qubit and $N_2=9$ two-qubit gates.

We use representative superconducting-device parameters consistent with
 IBM Quantum hardware \cite{qiskit-aer}. The coherence times are taken as
$T_1 = 120\,\mu\mathrm{s}$ and $T_2 = 80\,\mu\mathrm{s}$.
Single- and two-qubit gate durations are
$t_{1q} = 35\,\mathrm{ns}$ and $t_{2q} = 300\,\mathrm{ns}$,
with corresponding per-gate depolarizing error probabilities
$p_1 = 10^{-4}$ and $p_2 = 10^{-3}$.
The total effective circuit duration is:
\begin{equation}
t_{\mathrm{tot}} = N_1 t_{1q} + N_2 t_{2q}.
\end{equation}
From the coherence times $T_1$ and $T_2$, we compute the amplitude
damping and phase damping probabilities as
\begin{align}
\gamma_a &= 1 - e^{-t_{\mathrm{tot}}/T_1}, \\
\gamma_p &= 1 - e^{-t_{\mathrm{tot}}/T_\phi},
\qquad
\text{with }
\frac{1}{T_\phi}
=
\frac{1}{T_2}
-
\frac{1}{2T_1}.
\end{align}

The per-gate depolarizing noise is aggregated into effective error
probabilities:
\begin{align}
p_{1,\mathrm{eff}} &= 1 - (1 - p_1)^{N_1}, \\
p_{2,\mathrm{eff}} &= 1 - (1 - p_2)^{N_2}.
\end{align}

The resulting effective single-qubit gate-noise channel is modeled as
\begin{equation}
\mathcal{E}_{\mathrm{gate}}^{(1)}
=
\mathcal{D}(p_{2,\mathrm{eff}})
\circ
\mathcal{D}(p_{1,\mathrm{eff}})
\circ
\mathcal{P}_{\gamma_p}
\circ
\mathcal{A}_{\gamma_a},
\end{equation}
where $\mathcal{A}_{\gamma_a}$ denotes the amplitude-damping channel,
$\mathcal{P}_{\gamma_p}$ the phase-damping channel, and
$\mathcal{D}(p)$ the depolarizing channel.
For the $3|3$ bipartite system, where compression is applied independently
to both three-qubit subsystems $A_3$ and $B_3$, the total gate-noise map is
\begin{equation}
\mathcal{E}_{\mathrm{gate}}^{(6)}
=
\left(
\bigotimes_{q \in A_3} \mathcal{E}_{\mathrm{gate}}^{(1)}
\right)
\otimes
\left(
\bigotimes_{q \in B_3} \mathcal{E}_{\mathrm{gate}}^{(1)}
\right).
\end{equation}

In Fig.~\ref{fig:entanglement}, we present the performance as a function of the channel depolarizing probability $p$. We compare the uncompressed scenario, in which all 6 qubits are transmitted through the channel, with the compressed scenario, where only 4 qubits are transmitted. In addition, we distinguish between ideal compression and compression implemented with noisy gates.
This comparison isolates the fundamental trade-off: compression reduces exposure to channel noise by decreasing the number of transmitted qubits, but simultaneously introduces additional errors due to imperfections in the compression circuit.
As can be seen in Fig.~\ref{fig:entanglement}, there is a cross-over point, where the gain due to compression outweighs the loss due to circuit noise. For this particular setup, the cross-over point happens when the channel error probability is around 0.17. In the absence of circuit noise, the figure shows that there is a clear gain to be achieved since fewer qubits are exposed to the channel noise. These results demonstrate that, under an i.i.d.\ depolarizing noise model, using (noiseless) compression to distribute a fixed amount of entanglement across a smaller number of qubits is advantageous. 

%
\begin{figure}
    \centering
    \includegraphics[width=1.0\linewidth]{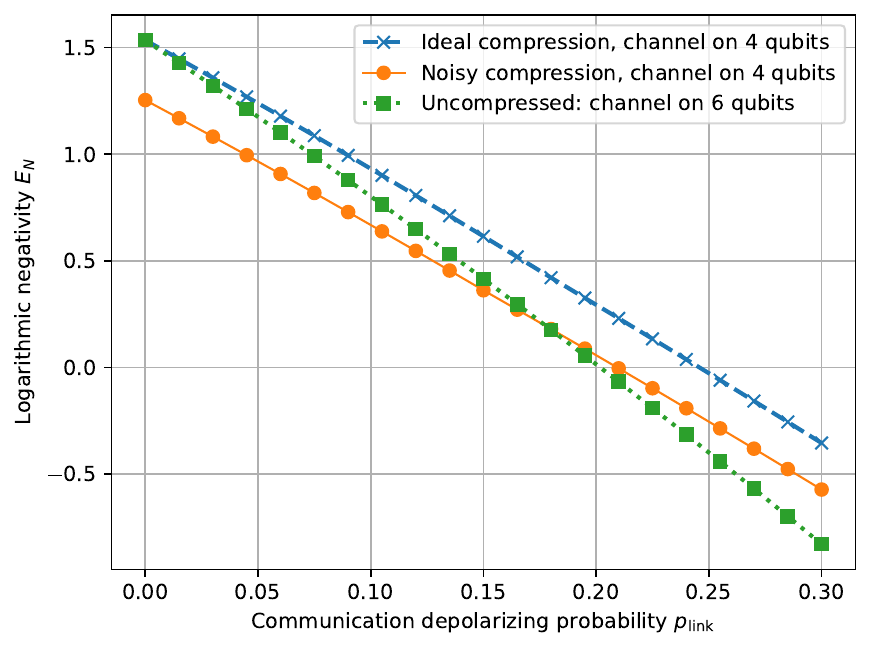}
    \caption{Uncompressed transmission of two three qubit systems with channel noise. Compressed transmission of the systems with and without gate noise followed by channel noise on 4 qubits.}
     \label{fig:entanglement}
\end{figure}

\section{Conclusions}\label{sec:conclusions}

We have shown that permutation-invariant quantum states, combined with symmetric subspace compression, provide an effective framework for distributing multipartite entanglement under resource constraints and uncertain subsystem partitioning. Lossless compression preserves entanglement via local isometries while significantly reducing communication cost, thereby increasing the entanglement per transmitted qubit. Moreover, for suitably structured symmetric states, entanglement can be accessed directly in the compressed code space, eliminating the need for decompression.

\bibliographystyle{ieeetr}
\bibliography{sample_new,bifile}

\newpage
\clearpage   
\pagenumbering{arabic}
\setcounter{page}{1}

\section*{Supplementary Material}
This document contains the proofs of the Theorems in the main paper "Quantum Compression for Distributed Entanglement" by {\O}stergaard et al. 

\begin{IEEEproof}[Proof of Theorem~\ref{thm:tradeoff_m2}]
Fix a bipartition \(\pi = A \mid A^c\), and let
$
k = \min(|A|,|A^c|) \le \left\lfloor \frac{n}{2}\right\rfloor .
$
Since \(\ket{\psi}\) is permutation-invariant, the reduced state on the smaller subsystem is supported on the symmetric subspace of \(k\) qubits, whose dimension is \(k+1\). This implies that:
\begin{equation}
E_{\mathrm{vN}}(\pi)
=
S(\rho_A)
\le
\log_2(k+1),
\label{eq:proof_tradeoff_dimbound}
\end{equation}
where \(\rho_A = \mathrm{Tr}_{A^c}(\ket{\psi}\bra{\psi})\), and we used that the entropy is upper bounded by the logarithm of the support dimension.

For all bipartitions with smaller subsystem size \(k\), we denote the total compression rate by \(r_k=r_A+r_{A^c}\). Since entanglement cannot increase under local encoding and decoding operations, the entanglement after compression and decompression is upper bounded by the entanglement of the compressed state. The latter is a bipartite state supported on registers of dimensions at most \(2^{r_A}\) and \(2^{r_{A^c}}\), respectively. Therefore its Schmidt rank is at most
\begin{equation}
\min\{2^{r_A},2^{r_{A^c}}\}
\le
2^{\lfloor r_k/2 \rfloor},
\end{equation}
which implies that  its bipartite entanglement satisfies
\begin{equation}
E_{\mathrm{vN}}(\pi)
\le
\min\{r_A,r_{A^c}\}
\le
\left\lfloor \frac{r_k}{2}\right\rfloor .
\label{eq:proof_tradeoff_ratebound}
\end{equation}
Combining \eqref{eq:proof_tradeoff_dimbound} and \eqref{eq:proof_tradeoff_ratebound} yields:
\begin{equation}
E_{\mathrm{vN}}(\pi)
\le
\min\!\left\{
\left\lfloor \frac{r_k}{2}\right\rfloor,\,
\log_2(k+1)
\right\}.
\end{equation}

Finally, among all unordered nontrivial bipartitions, exactly \(f(k,n-k)\) have smaller subsystem size \(k\). Averaging the above bound over all such bipartitions gives:
\begin{equation}
\overline{E}_{\mathrm{vN}}
\le
\frac{1}{2^{n-1}-1}
\sum_{k=1}^{\lfloor n/2\rfloor}
f(k,n-k)\,
\min\!\left\{
\left\lfloor \frac{r_k}{2}\right\rfloor,\,
\log_2(k+1)
\right\}.
\end{equation}
This proves the theorem.
\end{IEEEproof}

\begin{IEEEproof}[Proof of Theorem~\ref{thm:low_m2}]
We choose a feasible Dicke state, which is symmetric and let  
 $A|A^c$ be a bipartition with $|A|=k\le n/2$. 
The Dicke state admits the decomposition:
\begin{equation}
\ket{D_t^{(n)}}
=
\sum_{j=0}^{k}
\sqrt{
\frac{\binom{k}{j}\binom{n-k}{t-j}}{\binom{n}{t}}
}
\ket{D_j^{(k)}}\ket{D_{t-j}^{(n-k)}}.
\end{equation}
Hence the reduced state on $A$ is
\begin{equation}
\rho_A
=
\sum_{j=0}^{k}\lambda_j^{(k)}\ket{D_j^{(k)}}\bra{D_k^{(j)}},
\end{equation}
with eigenvalues
\begin{equation}
\lambda_j^{(k)}
=
\frac{\binom{k}{j}\binom{n-k}{t-j}}{\binom{n}{t}}.
\end{equation}
Therefore,
\begin{equation}
E_{\mathrm{vN}}(\pi)=S(\rho_A)=
-\sum_{j=0}^{k}\lambda_j^{(k)}\log_2\lambda_j^{(k)} \triangleq E_k.
\end{equation}
Averaging over all unordered nontrivial bipartitions with $k$ qubits for the smallest side, gives
\begin{equation}
\overline E_{\mathrm{vN}}^{(D)}
=
\frac{1}{2^{n-1}-1}
\sum_{k=1}^{\lfloor n/2\rfloor}f(k,n-k) E_k.
\end{equation}
Since $\ket{D_t^{(n)}}$ is an admissible permutation-invariant pure state, this value is achievable and therefore lower bounds the optimal average entanglement.

For lossless compression, the symmetric subspace of $k$ qubits has dimension $k+1$, so it can be embedded isometrically into $r_k=\lceil\log_2(k+1)\rceil$ qubits. Because local isometries preserve Schmidt coefficients, the von Neumann entanglement is unchanged by the compression. Thus, the achievable coding rate for a cut of size $k$ is the sum of the rates for both $A$ and $A^c$, which gives:
\begin{equation}
\lceil\log_2(k+1)\rceil + \lceil\log_2((n-k)+1)\rceil.
\end{equation}
Averaging over all bipartitions yields the stated rate and completes the proof.
\end{IEEEproof}

\begin{IEEEproof}[Proof of Theorem~\ref{thm:ubm}]
Fix a partition $\pi=\{S_1,\dots,S_m\}$ with block sizes
$\lambda_1\le \cdots \le \lambda_m$, and let
$\ket{\widetilde\psi_\pi}$ denote the compressed pure state on the output
registers of the local encoders. Since the coding scheme is lossless and the
reconstructed state is pure, the local decoding maps act isometrically on the
support of the compressed state and therefore preserve the GME. Hence it suffices to bound the GME of
$\ket{\widetilde\psi_\pi}$.

We use the dimension bound for the geometric entanglement as done in~\cite{WeiGoldbart2003}.
Let the local Hilbert-space dimensions of $\ket{\widetilde\psi_\pi}$ be
$d_1,\dots,d_m$, with the ordering
$d_{(1)}\le d_{(2)}\le \cdots \le d_{(m)}$.
Consider the bipartition
$S_{(1)} \,\big|\, S_{(2)}\cdots S_{(m)}$.
The Schmidt rank of $\ket{\widetilde\psi_\pi}$ across this cut is at most $d_{(1)}$.
Hence the largest Schmidt coefficient squared is at least $1/d_{(1)}$, so there exist normalized vectors
$\ket{\alpha_{(1)}}\in\mathcal H_{S_{(1)}}$ and
$\ket{\eta_{(2\cdots m)}}\in\bigotimes_{j=2}^m \mathcal H_{S_{(j)}}$
such that
\begin{equation}
\left|
\langle \alpha_{(1)}\otimes \eta_{(2\cdots m)} \mid \widetilde\psi_\pi\rangle
\right|^2
\ge \frac{1}{d_{(1)}}.
\end{equation}
We now proceed by regarding $\ket{\eta_{(2\cdots m)}}$ as a pure state on the remaining
$m-1$ parties and bipartition it as
$S_{(2)} \,\big|\, S_{(3)}\cdots S_{(m)}$.
Its Schmidt rank is at most $d_{(2)}$, so there exist normalized vectors
$\ket{\alpha_{(2)}}\in\mathcal H_{S_{(2)}}$ and
$\ket{\eta_{(3\cdots m)}}\in\bigotimes_{j=3}^m \mathcal H_{S_{(j)}}$
such that
\begin{equation}
\left|
\langle \alpha_{(2)}\otimes \eta_{(3\cdots m)} \mid \eta_{(2\cdots m)}\rangle
\right|^2
\ge \frac{1}{d_{(2)}}.
\end{equation}
Iterating this argument through the first $m-1$ local systems yields normalized
product vectors $\ket{\alpha_{(1)}},\dots,\ket{\alpha_{(m)}}$ satisfying
\begin{equation}
\left|
\langle \alpha_{(1)}\otimes\cdots\otimes\alpha_{(m)} \mid \widetilde\psi_\pi\rangle
\right|^2
\ge
\frac{1}{\prod_{i=1}^{m-1} d_{(i)}}.
\end{equation}
Therefore,
\begin{equation}
E_{\mathrm G}(\pi)
\le
\log_2\!\Bigl(\prod_{i=1}^{m-1} d_{(i)}\Bigr).
\label{eq:EG_general_dim_bound}
\end{equation}

Let the $m$ encoders use $r_1,\dots,r_m$ qubits, respectively.
Then the encoded local dimensions satisfy $d_i\le 2^{r_i}$. Hence
\begin{equation}
E_{\mathrm G}(\pi)
\le
\sum_{i=1}^{m-1} r_{(i)},
\label{eq:EG_general_rate_pre}
\end{equation}
where $r_{(1)}\le \cdots \le r_{(m)}$ are the ordered rates.
Since
\begin{equation}
\sum_{i=1}^{m-1} r_{(i)}
=
r(\pi)-r_{(m)},
\end{equation}
and since the largest local rate must be greater than or equal to the average, we have that:
\begin{equation}
r_{(m)}\ge \left\lceil \frac{r(\pi)}{m}\right\rceil,    
\end{equation}
which implies that
\begin{equation}
\sum_{i=1}^{m-1} r_{(i)}
\le
r(\pi)-\left\lceil \frac{r(\pi)}{m}\right\rceil
=
\left\lfloor \frac{m-1}{m}\,r(\pi)\right\rfloor.
\end{equation}
Combining this with~\eqref{eq:EG_general_rate_pre} gives
\begin{equation}
E_{\mathrm G}(\pi)
\le
\left\lfloor \frac{m-1}{m}\,r(\pi)\right\rfloor.
\label{eq:EG_general_rate_bound}
\end{equation}

On the other hand, because the source state is permutation-invariant, each block
of size $\lambda_i$ is supported on the symmetric subspace of dimension
$\lambda_i+1$. Hence the local support dimensions satisfy
$d_i\le \lambda_i+1, i=1,\dots,m$.
Substituting this into~\eqref{eq:EG_general_dim_bound} yields
\begin{equation}
E_{\mathrm G}(\pi)
\le
\log_2\!\Bigl(\prod_{i=1}^{m-1}(\lambda_i+1)\Bigr),
\label{eq:EG_general_sym_bound}
\end{equation}
since $\lambda_1\le \cdots \le \lambda_m$ and therefore the product of the
$m-1$ smallest local dimensions is
$\prod_{i=1}^{m-1}(\lambda_i+1)$.

Combining~\eqref{eq:EG_general_rate_bound} and
\eqref{eq:EG_general_sym_bound} gives
\begin{equation}
E_{\mathrm G}(\pi)
\le
\min\!\left\{
\left\lfloor \frac{m-1}{m}\,r(\pi)\right\rfloor,\,
\log_2\!\Bigl(\prod_{i=1}^{m-1}(\lambda_i+1)\Bigr)
\right\}.
\end{equation}

The bound depends only on the size pattern $\lambda$, so averaging over all
unordered $m$-partitions, of which $f(\lambda)$ have block sizes $\lambda$,
yields
\begin{equation}
\overline E_{\mathrm G}
\le
\frac{1}{S(n,m)}
\sum_{\lambda\in\Lambda_{n,m}}
f(\lambda)\,
\min\!\left\{
\left\lfloor \frac{m-1}{m}\,r_\lambda\right\rfloor,\,
\log_2\!\Bigl(\prod_{i=1}^{m-1}(\lambda_i+1)\Bigr)
\right\}.
\end{equation}
This proves the theorem.
\end{IEEEproof}

\begin{IEEEproof}[Proof of Theorem~\ref{thm:lbm}]
We prove the bound by constructing a feasible source state and a feasible
lossless compression scheme. Choose the permutation-invariant pure state
$\ket{\psi}=\ket{D_t^{(n)}},
\; t=\lfloor n/2\rfloor$,
and fix an $m$-partition $\pi=\{S_1,\dots,S_m\}$
with block sizes
\begin{equation}
|S_i|=\lambda_i,\quad
1\le \lambda_1\le \cdots \le \lambda_m,\quad
\sum_{i=1}^m \lambda_i=n.    
\end{equation}
The Dicke state admits the multipartite decomposition~\cite{Moreno2018,Stockton2003}:
\begin{equation}
\ket{D_t^{(n)}}
=
\sum_{\substack{j_1,\dots,j_m\ge 0\\ j_i\le \lambda_i,\ \sum_{i=1}^m j_i=t}}
\sqrt{
\frac{\prod_{i=1}^m \binom{\lambda_i}{j_i}}{\binom{n}{t}}
}
\;
\ket{D_{\lambda_1}^{(j_1)}}\otimes\cdots\otimes\ket{D_{\lambda_m}^{(j_m)}}.
\label{eq:general_dicke_decomp}
\end{equation}
Thus, $\ket{D_t^{(n)}}$ lies in $\bigotimes_{i=1}^m \mathrm{Sym}^{\lambda_i}(\mathbb C^2)$.
Therefore, when computing the geometric entanglement across the partition
$\pi$, the optimization may be restricted to product vectors in these local
symmetric subspaces. For the $i$th local symmetric subspace we define the state $\ket{u_i}$ by expanding onto the Dicke basis:
\begin{equation}
\ket{u_i}
=
\sum_{j_i=0}^{\lambda_i}
u_{i,j_i}\ket{D_{\lambda_i}^{(j_i)}},
\qquad
\|\ket{u_i}\|=1,
\end{equation}
where $u_{i,j_i} \in \mathbb{C}$ are the expansion coefficients of $\ket{u_i}$ in the Dicke basis, satisfying $\sum_{j_i=0}^{\lambda_i} |u_{i,j_i}|^2 = 1$.
The maximal product overlap is then given by:
\begin{equation}
\Lambda_\lambda
=
\max_{\|u_1\|=\cdots=\|u_m\|=1}
\left|
\sum_{\substack{j_1,\dots,j_m\ge 0\\ j_i\le \lambda_i,\ \sum_{i=1}^m j_i=t}}
\sqrt{
\frac{\prod_{i=1}^m \binom{\lambda_i}{j_i}}{\binom{n}{t}}
}
\;
\prod_{i=1}^m u_{i,j_i}
\right|.
\end{equation}

Hence the geometric entanglement of the Dicke state across any partition with
size pattern $\lambda$ is
\begin{equation}
E_G^{(D)}(\lambda)
=
-\log_2 \Lambda_\lambda^2.
\end{equation}
By permutation invariance, this value depends only on the block-size multiset
$\lambda$ and not on the specific realization of the partition.

We now compress each block losslessly into its symmetric subspace.
Since
$\dim \mathrm{Sym}^{\lambda_i}(\mathbb C^2)=\lambda_i+1$,
the $i$th block can be represented using
$\left\lceil \log_2(\lambda_i+1)\right\rceil$
qubits via a local isometric embedding.
Thus, for the size pattern $\lambda$, the total rate is
\begin{equation}
r_\lambda
=
\sum_{i=1}^m
\left\lceil \log_2(\lambda_i+1)\right\rceil.
\end{equation}

Because the coding is lossless, the encoder and decoder act as local isometries
on the local symmetric subspaces. Local isometries preserve the geometric
entanglement of pure states, so the reconstructed state has the same geometric
entanglement as the original Dicke state across the partition.
Therefore, for every partition of size pattern $\lambda$, the achievable
entanglement equals $E_G^{(D)}(\lambda)$.

Averaging over all unordered $m$-partitions, of which $f(\lambda)$ have
block sizes $\lambda$, gives the achievable average entanglement
\begin{equation}
\overline E_{\mathrm G}^{(D)}
=
\frac{1}{S(n,m)}
\sum_{\lambda\in\Lambda_{n,m}}
f(\lambda)\,E_G^{(D)}(\lambda).
\end{equation}
The corresponding average rate is
\begin{equation}
R^{(D)}
=
\frac{1}{S(n,m)}
\sum_{\lambda\in\Lambda_{n,m}}
f(\lambda)
\sum_{i=1}^m
\left\lceil \log_2(\lambda_i+1)\right\rceil.
\end{equation}
Hence, whenever the allowed average rate satisfies
$
R\ge R^{(D)},
$
this Dicke-state construction is feasible and achieves
\begin{equation}
\overline E_{\mathrm G}
\ge
\overline E_{\mathrm G}^{(D)}
=
\frac{1}{S(n,m)}
\sum_{\lambda\in\Lambda_{n,m}}
f(\lambda)\,E_G^{(D)}(\lambda).    
\end{equation}
This proves the theorem.
\end{IEEEproof}

\end{document}